\tikzstyle{every picture} = [>=latex]
\newcommand{\subexpsubexp}[1]{\ensuremath{\mbox{\sc DTIME}(2^{o(#1)})/\mbox{\sc SubEXP}}}
\newcommand{\PP}{\ensuremath{\mathrm{P}}\xspace}
\newcommand{\EXP}{\ensuremath{\mathrm{EXP}}\xspace}
\newcommand{\NEXP}{\ensuremath{\mathrm{NEXP}}\xspace}
\newcommand{\NP}{\ensuremath{\mathrm{NP}}\xspace}
\newcommand{\XP}{\ensuremath{\mathrm{XP}}\xspace}
\newcommand{\FPT}{\ensuremath{\mathrm{FPT}}\xspace}
\newcommand{\MCa}[2]{\ensuremath{\mathrm{MC}(#1,#2)}}
\newcommand{\cC}{{\mathcal C}}
\newcommand{\cD}{{\mathcal D}}
\newcommand{\cI}{{\mathcal I}}
\newcommand{\cG}{{\mathcal G}}
\newcommand{\cO}{{\mathcal O}}
\newcommand{\cP}{{\mathcal P}}
\newcommand{\cS}{{\mathcal S}}
\newcommand{\cV}{{\mathcal V}}
\newcommand{\ca}[1]{{\mathcal #1}}
\newcommand{\N}{\mathbb{N}}
\newcommand{\MC}[1]{\MCa{#1}{\cC}}
\newcommand{\MS}{\ensuremath{\mathrm{MSO}}\xspace}
\newcommand{\MSO}{\ensuremath{\mathrm{MSO}_1}\xspace}
\newcommand{\MSOii}{\ensuremath{\mathrm{MSO}_2}\xspace}
\newcommand{\MSOL}{\ensuremath{\mathrm{MSO_1\text{-}}L}\xspace}
\newcommand{\MSOiiL}{\ensuremath{\mathrm{MSO_2\text{-}}\Gamma}\xspace}
\newcommand{\PH}{\ensuremath{\mathrm{PH}}\xspace}
\newcommand{\threecolor}{\textsc{3-Colouring}}
\newcommand{\tw}{\mathop\mathit{tw}}
\def\prebox#1{\mathop{\textsl{#1}}\nolimits}
\def\thmso{\mbox{\rm Th}_{\MS}}
\def\Isups{I^{\mbox{\tiny$\subseteq$}}}
\newcommand{\adj}{\prebox{adj}}
\newcommand{\arc}{\prebox{arc}}
\def\roin{\mathrel{\in \textit{r.o.}}}
\def\rosubseteq{\mathrel{\subseteq \text{r.o.}}}
\def\SIZE{{\rm SIZE}}
\def\PH{{\rm PH}}
\def\SigmaCol#1{\ensuremath{\Sigma_{#1}}{\textsc{3Col}}}
\def\SigmaSat#1{\ensuremath{\Sigma_{#1}}\textsc{SAT}}
\def\Precol{{\it Precol}}
\newtheorem{theorem}{Theorem}[section]
\newtheorem{lemma}[theorem]{Lemma}
\newtheorem{corollary}[theorem]{Corollary}
\newtheorem{proposition}[theorem]{Proposition}
\newtheorem{remark}[theorem]{Remark}
\newtheorem{definition}[theorem]{Definition}
\newtheorem{example}[theorem]{Example}
\begin{document}


\title{\bfseries Lower Bounds on the Complexity of \MSO Model-Checking
	\stepcounter{fnote}\footnote{A preliminary short version of this paper appeared in the Proceedings of STACS'12.}}
\author[fi]{Robert Ganian}
\ead{xganian1@fi.muni.cz}

\author[fi]{Petr Hlin\v en\'y}
\ead{hlineny@fi.muni.cz}

\author[rwth]{Alexander Langer}
\ead{langer@cs.rwth-aachen.de}

\author[fi]{Jan Obdr\v z\'alek}
\ead{obdrzalek@fi.muni.cz}

\author[rwth]{Peter Rossmanith}
\ead{rossmani@cs.rwth-aachen.de}

\author[rwth]{Somnath Sikdar}
\ead{sikdar@cs.rwth-aachen.de}

\address[rwth]{Computer Science, RWTH Aachen University, 
Germany\fnmark[fn1]}
\fntext[fn1]{Supported by Deutsche Forschungsgemeinschaft, project RO~927/9.}

\address[fi]{Faculty of Informatics, Masaryk University,
Czech Republic\fnmark[fn2]}
\fntext[fn2]{Supported by the Czech Science Foundation, project P202/11/0196.}


\begin{abstract}
One of the most important algorithmic meta-theorems is a famous result by Courcelle,
which states that any graph problem definable in monadic second-order logic 
with edge-set quantifications (i.e., \MSOii model-checking) 
is decidable in linear time on any class of graphs of bounded tree-width. 
Recently, Kreutzer and Tazari~\cite{KT10b}
proved a corresponding complexity lower-bound---that 
\MSOii model-checking is not even in XP wrt.\ the formula size as parameter
for graph classes that are subgraph-closed and whose tree-width 
is poly-logarithmically unbounded. Of course, this is not an unconditional 
result but holds modulo a certain complexity-theoretic 
assumption, namely, the Exponential Time Hypothesis (ETH). 

In this paper we present a closely related result. We show that 
even \MSO model-checking with a fixed set of vertex labels, 
but without edge-set quantifications, is not in XP wrt.\ the formula 
size as parameter for graph classes which are subgraph-closed and 
whose tree-width is poly-logarithmically unbounded
unless the non-uniform ETH fails.
In comparison to Kreutzer and Tazari;
$(1)$~we use a stronger prerequisite, namely non-uniform instead of uniform
ETH, to avoid the effectiveness assumption and the construction of certain
obstructions used in their proofs; and $(2)$~we assume a different
set of problems to be efficiently decidable, namely
\MSO-definable properties on vertex labeled graphs instead of
\MSOii-definable properties on unlabeled graphs.

Our result has an interesting consequence in the realm of digraph width
measures: Strengthening the recent result~\cite{Ganianetal10}, we show that no
subdigraph-monotone measure can be ``algorithmically useful'', unless it is
within a poly-logarithmic factor of undirected tree-width.

\medskip\noindent
{\bf Keywords}

graph MSO logic; tree-width; digraph width; intractability
\end{abstract}

\maketitle

\section{Introduction}

A famous result by Courcelle, published in 1990, states that any graph
property definable in monadic second-order logic with quantification over
vertex- and edge-sets (\MSOii) can be decided in linear time on
any class of graphs of bounded tree-width \cite{Cou90}. 
More precisely, the \MSOii\ {\em model-checking}
problem for a graph $G$ of tree-width ${\rm tw}(G)$ and a formula $\varphi$, i.e.
the question whether $G\models\varphi$, can be solved in time 
$\ca O\big(|G|\cdot f(\varphi,{\rm tw}(G))\big)$. 
In the parlance of parameterized complexity, this means that \MSOii
model-checking is {\em fixed-parameter tractable (FPT)} 
with respect to the tree-width as parameter. 

This result has a strong
significance. As \MSOii logic can express many interesting graph properties,
we immediately get linear-time algorithms for important \NP-hard problems,
such as \textsc{Hamiltonian Cycle}, \textsc{Vertex Cover}, and \textsc{3-Colorability}, 
on graphs of bounded tree-width. Such a result is called an \emph{algorithmic
  meta-theorem}, and many other algorithmic meta-theorems have since
appeared for other classes of graphs---see e.g.~\cite{Gro08,Kre11} for a
good survey.

As can be seen, Courcelle's theorem is a fast and relatively easy way of
establishing that a problem can be solved efficiently on graphs of bounded
tree-width. However, one may ask how far this result could be
generalized.  That is, is there another reasonable graph class 
of unbounded tree-width such that \MSOii
model-checking remains tractable on this class? Considering how important
this question is for theoretical understanding of what makes some problems 
on certain graph classes hard, it is surprising that until recently there has
not been much research in this direction.

For simplicity,
we postpone basic formal definitions to Sections~\ref{sec:preliminaries} and~\ref{sec:ktc}.

\subsection{Related prior work}

The first ``lower bound'' to Courcelle's theorem, by Makowski and
Mari\~{n}o, appeared in~\cite{mm03}. In that paper the authors show that if
a class of graphs has unbounded tree-width and is closed under topological
minors, then model-checking for \MSOii is not fixed-parameter tractable unless
$\PP = \NP$.
More recently, a stronger lower bound result by Kreutzer---not requiring the class
to be closed under minors---%
appeared in~\cite{Kre09CSL}. In that paper, Kreutzer
used the following version of ``unbounding'' the tree-width of a graph
class:

\begin{definition}[Kreutzer and Tazari \cite{Kre09CSL,KT10b}]
\label{def:strongly_unbounded}
The tree-width of a class $\cC$ of graphs is {\em strongly unbounded} by a
function $f\colon\N\to\N$ if there is $\epsilon<1$ and a polynomial $p(x)$
s.t. for all $n\in\N$ there is a graph $G_n\in\cC$ with the following properties:
\begin{enumerate}[i)]
\item the tree-width of $G_n$ is between $n$ and $p(n)$ and is greater than 
$f(|G_n|)$, and
\item given $n$, the graph~$G_n$ can be constructed in time $2^{n^\epsilon}$.
\end{enumerate}
The degree of the polynomial $p$ is called the \emph{gap-degree} of $\cC$
(with respect to $f$). The tree-width of $\cC$ is \emph{strongly unbounded
poly-logarithmically} if it is strongly unbounded by $\log^c{n}$, for all
$c\geq 1$.
\end{definition}

In other words, saying that tree-width of $\cC$ is \emph{strongly unbounded} means that 
\begin{enumerate}[(i)]
\item
there are no big gaps between the tree-width of witness graphs (those certifying 
that the tree-width of $n$-vertex graphs in $\cC$ is greater than $f(n)$), and 
\item
we can compute such witnesses effectively---in sub-exponential time wrt.~$n$.
\end{enumerate}

The main result of~\cite{Kre09CSL} is the following theorem:
\begin{theorem}[Kreutzer~\cite{Kre09CSL}]
Let $\Gamma$ be a fixed set of (at least two) colours, 
and $\cC$ be a class of graphs such that 
\begin{enumerate}[$(1)$]
\item
 the tree-width of $\cC$ is strongly unbounded poly-logarithmically; 
\item
$\cC$ is closed under 
$\Gamma$-colourings (i.e., if $G\in\cC$ and $G'$ is obtained from $G$ 
by colouring some vertices or edges by colours from $\Gamma$, then $G'\in\cC$);
and, 
\item
$\cC$ is constructable (i.e., given a witness graph in $\cC$, a certain
substructure can be computed in polynomial time). 
\end{enumerate}
Then $\MC{\MSOiiL}$, the \MSOii model-checking problem on
all $\Gamma$-coloured graphs from~$\cC$,
is not in \XP (and hence not in \FPT---see Section~\ref{def:param} for a definition of these complexity
classes), unless all problems in the polynomial-time hierarchy can be solved
in sub-exponential time. 
\end{theorem}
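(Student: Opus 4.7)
The plan is to prove the contrapositive: assuming $\MC{\MSOiiL}$ is in \XP, I would show that every problem in the polynomial-time hierarchy can be solved in sub-exponential time. Fix $L \in \PH$ (concretely, $\SigmaSat{k}$ for some $k$), and let $\psi$ be an input instance of size $m$. The strategy is to encode $\psi$, using the freedom granted by $\Gamma$-colouring closure, into a graph from $\cC$ that is only poly-logarithmically larger than $\psi$, and then check satisfiability by evaluating a single \MSOii formula $\varphi_L$ depending only on $L$ (not on $\psi$).

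The first step is to invoke the strongly unbounded tree-width assumption. For every $c \geq 1$ there is a polynomial $p$ such that, for each $n \in \N$, a witness graph $G_n \in \cC$ of tree-width between $n$ and $p(n)$ exists and can be built in time $2^{n^\epsilon}$, with $\tw(G_n) > \log^c |G_n|$. Choosing $n$ so that $\log^c |G_n| \approx m$ forces $|G_n| \le 2^{O(m^{1/c})}$ and $n \le m^{O(1)}$, and all of this can be prepared in time sub-exponential in $m$. Using the constructability assumption, I would then extract, in polynomial time, a well-structured high-tree-width substructure of $G_n$---morally a large grid minor, as guaranteed by excluded-grid theorems---of order roughly polynomial in $n$, which easily suffices to host the $m$ bits encoding $\psi$.

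The second step is to use the $\Gamma$-colouring closure to encode $\psi$ into this extracted substructure. Vertices and edges of the substructure serve as registers whose $\Gamma$-colours represent the input bits of $\psi$, while the grid geometry itself lets \MSOii recover coordinate information (rows, columns, and adjacency of registers) via quantification over edge sets. Combining this decoding gadget with a fixed \MSOii pattern that mimics the $k$-fold quantifier alternation of $\SigmaSat{k}$ over the encoded string yields a formula $\varphi_L$ with the property that the coloured graph $G_n^\psi \in \cC$ satisfies $\varphi_L$ iff $\psi \in L$. Feeding $G_n^\psi$ and $\varphi_L$ into the hypothetical \XP algorithm returns an answer in time $|G_n^\psi|^{f(|\varphi_L|)} = 2^{O(m^{1/c})}$, which is sub-exponential in $m$ and contradicts the lower bound for \PH under the stated assumption.

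The most delicate step, and the one I expect to be the real obstacle, is the construction of the decoding formula $\varphi_L$: the grid-like object inside $G_n$ need not be an induced subgraph but only a minor model, so $\varphi_L$ must identify its branch sets and its row/column structure uniformly across all values of $n$. This is precisely where the constructability hypothesis on $\cC$ is indispensable---it ensures that the substructure used for the encoding is \MSOii-definable (possibly with the help of a bounded amount of coloured advice baked into $G_n^\psi$), so that one single $\varphi_L$ works simultaneously for all instance sizes. Once the minor model is locatable by a fixed formula, the remainder of the argument reduces to a routine time-bookkeeping comparing $2^{n^\epsilon} + |G_n|^{O(1)}$ with the sub-exponential target $2^{o(m)}$.
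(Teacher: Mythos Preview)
The paper does not contain a proof of this theorem. It is stated in the ``Related prior work'' subsection as a result of Kreutzer~\cite{Kre09CSL}, purely as background, and no argument for it is given anywhere in the paper. There is therefore nothing to compare your proposal against.

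That said, your sketch is a reasonable high-level summary of the strategy in Kreutzer's original paper, and it is also broadly in the spirit of what the present paper does for its own main result (Theorem~\ref{thm:main}). The main technical differences worth noting are: (i) the present paper replaces the effectiveness condition (witnesses computable in time $2^{n^\epsilon}$) and the constructability hypothesis by working under the \emph{non-uniform} ETH and obtaining the grid-like subgraph plus its combinatorial data as oracle advice; (ii) instead of a grid minor in the usual sense, both Kreutzer--Tazari and this paper use the Reed--Wood grid-like structure (Theorem~\ref{thm:ReedWood}), which only guarantees a $K_\ell$-minor in the intersection graph of a path system, not an actual grid; and (iii) here the encoding is done via a fixed vertex-label alphabet and an \MSO interpretation (Lemma~\ref{lem:keyinterpret}), rather than via edge colours and \MSOii. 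Your remark that ``the grid-like object \dots\ need not be an induced subgraph but only a minor model'' is exactly the delicate point, and in Kreutzer's setting constructability is indeed what makes the branch-set information available to the reduction; in the present paper that information is simply part of the advice $A(m)$.
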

This would, of course, mean that the Exponential-Time Hypothesis (ETH)~\cite{IPZ01} fails. 
The results of~\cite{Kre09CSL} have been improved
by Kreutzer and Tazari in~\cite{KT10a}, 
where the constructability requirement (3) was dropped.

A further improvement by the same authors appeared in~\cite{KT10b}.  
The main result in~\cite{KT10b} can be stated as follows: 
\begin{theorem}[Kreutzer and Tazari~\cite{KT10b}]
Let $\cC$ be a class of graphs such that 
\begin{enumerate}
\item[$(1)$]
the tree-width of $\cC$ is strongly unbounded poly-logarithmically; and 
\item[$(2')$]
$\cC$ is closed under taking subgraphs, i.e. $G\in\cC$ and $H\subseteq G$ implies $H\in\cC$. 
\end{enumerate}
Then \MC{\MSOii}, the \MSOii model-checking problem on~$\cC$, is not in \XP unless
all problems in the polynomial-time hierarchy can be solved in
sub-exponential time.
\end{theorem}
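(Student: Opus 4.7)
My plan is to prove the contrapositive: assuming $\MC{\MSOii}$ on $\cC$ is in \XP with $|\varphi|$ as parameter, I would show that each level $\SigmaSat{k}$ of the polynomial hierarchy is decidable in sub-exponential time. Fix $k$ and an instance $\psi$ of $\SigmaSat{k}$ on $n$ variables. The strategy is (i) construct, in time $2^{n^\epsilon}$, a graph $H \in \cC$ of polynomial size that combinatorially realises $\psi$, and (ii) exhibit a universal \MSOii formula $\varphi_k$ of size depending only on $k$ such that $H \models \varphi_k$ iff $\psi$ is true. The XP algorithm would then solve $\psi$ in time $|H|^{f(|\varphi_k|)} = \poly(n)$ for each fixed $k$, and combined with the sub-exponential reduction overhead this places every $\SigmaSat{k}$ in \subexp.

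The heart of the argument is step (i). Using the strongly unbounded tree-width hypothesis, I obtain in time $2^{n^\epsilon}$ a graph $G_n \in \cC$ with $\tw(G_n)$ at least a $\mathit{poly}$-$\log$ factor larger than $n$ and $|G_n| \leq p(\tw(G_n))$. By an appropriate Excluded Grid Theorem, $G_n$ contains a subdivided wall of side $\Omega(n)$ as a topological minor, and because $\cC$ is subgraph-closed this subdivided wall is already a subgraph of $G_n$ that lies in $\cC$ --- this is exactly where subgraph-closure replaces the `colouring' or `constructability' assumption of~\cite{Kre09CSL,KT10a}. Standard variable--clause gadgets then embed $\psi$ into the wall: one dimension indexes the variable blocks and their quantifier scopes, the other indexes the clauses, and the subdivided grid edges carry the variable--clause incidence. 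Any auxiliary markings needed (to distinguish clauses, variable blocks, etc.) are realised by small local subgraph patterns attached to the wall, so the whole resulting $H$ remains a subgraph of $G_n$.

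The formula $\varphi_k$ itself has a fixed structure that depends only on $k$: a preamble uses edge-set quantification to re-identify the wall inside $H$ (each subdivided edge is recognised as an internally vertex-disjoint path between `branch vertices', which is \MSOii-definable); an outer block of $k$ alternating second-order quantifiers then guesses truth assignments to the successive quantifier blocks of $\psi$ as vertex-sets of the wall; and an interior first-order check verifies that every clause gadget is satisfied. The size $\ca O(k)$ of $\varphi_k$ is independent of $n$, so $|H|^{f(|\varphi_k|)}$ is indeed polynomial in $n$ for each fixed~$k$.

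The main obstacle, as I see it, is step~(i): extracting a large subdivided wall as a genuine \emph{subgraph} (not merely a minor) of $G_n$ in sub-exponential time, using only subgraph-closure of $\cC$. In a minor-closed setting one could contract edges to turn a grid minor into a grid subgraph, but here only deletions are allowed, which forces one to use the topological-minor version of the Excluded Grid Theorem and absorbs a poly-logarithmic factor between the tree-width of $G_n$ and the side-length of the extractable wall. This is precisely the loss for which the `strongly unbounded poly-logarithmically' hypothesis, together with the polynomial gap-degree of $\cC$, has been tailored: it leaves just enough room for the wall to host an encoding of a $\SigmaSat{k}$ instance with $n$ variables while $|H|$ stays polynomial in $n$.
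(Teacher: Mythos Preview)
First, note that this theorem is not proved in the present paper; it is quoted from \cite{KT10b} as prior work. The paper's own contribution is the related Theorem~\ref{thm:main} for $\MSOL$ under the non-uniform ETH, and its proof (like that of \cite{KT10b}) rests on Reed--Wood grid-like subgraphs, not on walls.

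Your proposal has a genuine gap in step~(i). You assert that the Excluded Grid Theorem loses only ``a poly-logarithmic factor between the tree-width of $G_n$ and the side-length of the extractable wall''. This is false: even the best bound now known (Chekuri--Chuzhoy, which in any case postdates \cite{KT10b}) is polynomial with a sizeable exponent, and at the time of \cite{KT10b} no polynomial bound was available at all. With a super-polynomial gap, a graph of tree-width $n$ yields only a wall of side $o(n^\varepsilon)$ for every $\varepsilon>0$, which cannot host an $n$-variable $\SigmaSat{k}$ instance. This is exactly why \cite{KT10b}, and this paper, do \emph{not} extract walls but rather Reed--Wood \emph{grid-like} subgraphs (Theorem~\ref{thm:ReedWood}): tree-width $c\ell^4\sqrt{\log\ell}$ guarantees a grid-like subgraph of order $\ell$, and it is \emph{this} polynomial relationship that the strongly-unbounded-poly-logarithmically hypothesis is calibrated to absorb. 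Grid-like graphs are not walls, and interpreting the target instance inside them requires further machinery---the Turing-machine encoding of \cite{KT10b}, or Lemma~\ref{lem:keyinterpret} here.

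A second problem is the phrase ``small local subgraph patterns attached to the wall, so the whole resulting $H$ remains a subgraph of $G_n$''. In a subgraph-closed class you may only \emph{delete} from $G_n$, never attach; there is no reason the required gadgets sit inside $G_n$ next to your wall. Any encoding must therefore work by selectively removing pieces of the extracted structure, with the $\MSOii$ formula recovering the instance from the deletion pattern---this is the technically delicate part of \cite{KT10b} and your sketch does not address it. Finally, a minor correction: Definition~\ref{def:strongly_unbounded} gives $|G_n|<2^{n^\epsilon}$, not $|G_n|\le p(\tw(G_n))$, so the XP algorithm yields a sub-exponential rather than a polynomial running time; that is still enough for the stated conclusion, but your claim $|H|^{f(|\varphi_k|)}=\poly(n)$ is too strong.
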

Note that $(2')$, to be closed under subgraphs, is a strictly weaker condition
than previous $(2)$, to be closed under $\Gamma$-colourings (of edges, too).

\subsection{New contribution}
In this paper we prove a result closely related to 
Kreutzer--Tazari's~\cite{Kre09CSL,KT10a,KT10b} but for 
\MSO\ logic with a fixed set of vertex 
labels.
The role of {\em vertex labels} in our paper is similar to that of colours 
in~\cite{Kre09CSL,KT10a}, but weaker in the sense that the labels are not assigned to edges.%
\footnote{The reason we use the term labels and not colours is to
be able to clearly distinguish between vertex-labeled graphs and the
coloured graphs used in~\cite{Kre09CSL,KT10a}, where colours are
assigned to edges and vertices.
}
In contrast to the work by Kreutzer and Tazari, 
we assume a different set of problems---those expressible by \MSOL
on graphs with vertex labels from a fixed finite set $L$\,---to be efficiently solvable
on a graph class in order to derive an analogous conclusion.

Before stepping further, we mention one more fact. There exist
classes $\cC$ of $L$-labeled graphs of unbounded tree-width on which \MC{\MSOL},
the \MSO model-checking problem on $\cC$, is
polynomial time solvable, e.g.\ classes of bounded clique-width or
rank-width. But it is important to realize that these classes are 
{\em not} closed under taking subgraphs. 

Our main result then reads---cf.~Section~\ref{sec:tmt}:

\begin{theorem}[reformulated as Theorem~\ref{thm:main}] \label{thm:mainI}
Assume a (suitable but fixed) finite label set $L$, and a graph class $\cG$ 
satisfying the following two properties:
\begin{enumerate}[a)]
\item $\cG$ is closed under taking subgraphs and under $L$-vertex-labelings,
\item the tree-width of $\cG$ is densely unbounded poly-logarithmically
(see Def.~\ref{defn:unbounded_polylog}).
\end{enumerate}
Then \MCa{\MSOL}{\cG}, the \MSOL model-checking problem on all $L$-vertex-labeled
graphs from $\cG$, is {\em not} in \XP unless the non-uniform 
Exponential-Time Hypothesis fails.
\end{theorem}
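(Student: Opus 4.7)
The plan is to prove the contrapositive: from a hypothetical \XP algorithm for $\MCa{\MSOL}{\cG}$, I build a non-uniform sub-exponential decision procedure for $3$-\textsc{Sat}, contradicting the non-uniform ETH. The overall strategy mirrors Kreutzer and Tazari~\cite{KT10b}, but is adapted to \MSO with a fixed vertex-label set~$L$; this is precisely what lets us strengthen the assumption to \emph{non-uniform} ETH and drop their constructibility and obstruction machinery.

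Given an $n$-variable $3$-\textsc{Sat} instance $\varphi$ (with $\Theta(n)$ clauses after sparsification), I would first use property~(b) of the theorem---dense poly-logarithmic unboundedness of tree-width---to obtain a witness graph $G_n\in\cG$ with $|G_n|$ bounded sub-exponentially in~$n$ but of tree-width growing faster than any $\log^c|G_n|$. In the non-uniform setting $G_n$ may be hard-wired into the circuit that handles inputs of length~$n$, so no effective construction is required; this is exactly where we trade the uniform ETH of~\cite{KT10b} for its non-uniform variant and avoid their explicit obstruction computations. Next, a structural result for subgraph-closed classes of large tree-width (an excluded topological grid/wall theorem) yields a large subdivided wall \emph{as a subgraph} of $G_n$; by the subgraph-closure in~(a), this wall together with the labels we later attach remains in $\cG$.

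The heart of the reduction is encoding $\varphi$ by vertex labels from~$L$, legitimate by the label-closure in~(a). Since there is no edge-set quantification in \MSO and no edge labels at our disposal, the role played by coloured edges in~\cite{KT10b} is taken over by the subdivision vertices automatically present on the wall: each branch vertex and each subdivision vertex receives a label encoding which variable, which clause, or which literal occurrence it represents. I then design a single \MSO sentence $\Phi_{\mathrm{SAT}}$, of constant size depending only on~$L$ and the wall ``pattern'', which guesses a truth assignment as a set of labelled variable-vertices, verifies consistency along variable-rows, and checks that every clause-column is satisfied through the appropriate literal-labelled subdivision vertices. By construction, $G_n\models\Phi_{\mathrm{SAT}}$ under this labeling if and only if $\varphi$ is satisfiable.

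Invoking the assumed \XP algorithm then decides $G_n\models\Phi_{\mathrm{SAT}}$ in time $|G_n|^{f(|\Phi_{\mathrm{SAT}}|)}$; since $|\Phi_{\mathrm{SAT}}|$ is an absolute constant while $|G_n|$ is sub-exponential in~$n$, the whole procedure runs in time $2^{o(n)}$ and is realised by a sub-exponential-size non-uniform circuit family, refuting the non-uniform ETH. The main obstacle I expect is the combined structural/encoding step: one must guarantee that a subdivided wall large enough to carry an $n$-variable instance is present \emph{as a subgraph} (not merely as a minor) of every witness~$G_n$, and then simulate the effect of edge colours using only vertex labels on the wall's subdivision vertices, so that a constant-size \MSO formula can faithfully read off an arbitrary $3$-\textsc{Sat} instance and verify its satisfiability.
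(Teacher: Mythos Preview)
Your overall strategy---hard-wire the witness $G_n$ as non-uniform advice, extract a large grid-like subgraph, encode the hard instance by vertex labels, and run the assumed \XP algorithm on a fixed \MSO sentence---matches the paper. But two steps are not as innocent as you make them sound.

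First, the structural step. You invoke ``an excluded topological grid/wall theorem'' to get a wall large enough to carry an $n$-variable instance inside a graph whose tree-width you control only up to some $m$ with $|G_n|<2^{m^{1/c}}$. For the arithmetic to give $2^{o(n)}$ you need the wall size to be \emph{polynomial} in the tree-width. At the time of the paper no such polynomial grid-minor theorem was available; the paper explicitly remarks in its conclusion that having one would simplify (and even strengthen) the result. The paper therefore does \emph{not} use walls at all: it uses the Reed--Wood theorem that tree-width $\ge c\,\ell^{4}\sqrt{\log\ell}$ guarantees a \emph{grid-like} subgraph $(G_m,\cP_m)$ of order~$\ell$ (Definition~\ref{def:grid-like}, Theorem~\ref{thm:ReedWood}). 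The price is that the useful combinatorial object is not $G_m$ itself but the intersection graph $I(\cP_m)$ of its path system, and the key technical work (Lemma~\ref{lem:keyinterpret}) is an \MSO interpretation of arbitrary subgraphs of $I(\cP_m)$ inside a vertex-labeling of $G_m$, using a strong edge-colouring (Theorem~\ref{thm:strongedgecolouring}) to simulate ``white'' and ``black'' path-edges with $\le 47$ vertex labels. Your wall shortcut bypasses exactly the part of the argument the paper had to invent.

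Second, your encoding. With a \emph{fixed} finite $L$ you cannot ``label each vertex with which variable or clause it represents''---there are $n$ of those. What one must do instead is use the two families of paths (rows/columns, or the $\cP_1,\cP_2$ of Proposition~\ref{prop:grid-like}) positionally, with constant-size labels only distinguishing path colours, representatives, and incidence marks; the paper does precisely this, and then reduces $3$-\textsc{Colourability} (not $3$-\textsc{Sat}) on an arbitrary input graph $F$ via two interpretations: first into a $\{1,3\}$-regular graph $H$ (Lemma~\ref{lem:interpretin13regular}, subdivision-invariant), then a subdivision of $H$ is found inside $I(\cP_m)$ (possible because $I(\cP_m)$ has a $K_{m^b}$-minor), and finally $H_1\subseteq I(\cP_m)$ is interpreted in $(G_m,\lambda)$ by Lemma~\ref{lem:keyinterpret}. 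The advice $A(m)=\langle G_m,\cP_m,\cV_m,\gamma_m\rangle$ carries not just the witness graph but also the path system, the $K_{m^b}$-minor model, and the strong edge-colouring. If you want to salvage your direct $3$-\textsc{Sat} encoding on a wall, you will need to reproduce an analogue of Lemma~\ref{lem:keyinterpret} for walls and make the row/column navigation explicit in \MSO with constantly many labels; the paper's interpretation machinery is exactly the template for that.
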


Our general approach follows that by Kreutzer and Tazari in~\cite{Kre09CSL,KT10a,KT10b}
but differs from theirs in three main ways:
\begin{enumerate}[(I)]
\item
Kreutzer and Tazari require witnesses as in (ii) of 
Definition~\ref{def:strongly_unbounded} of~\cite{KT10b} 
to be computable effectively in their proofs. 
It is unclear how this can be done,
and hence they simply add this as a natural requirement on~$\cC$. 
Furthermore, the construction of certain obstructions (grid-like minors) 
used in their proof requires an involved machinery~\cite{KT10a}.
We adopt a different position (note our ``densely unbounded''
in Definition~\ref{def:denselyunbounded} vs.\
``strongly unbounded'') and avoid both aspects by using a stronger 
complexity-theoretic assumption, namely the non-uniform ETH instead of 
the ordinary ETH. In this way, we 
can get the obstructions as \emph{advice} ``for free.''

\item
Our result applies to \MSOL model-checking on {\em$L$-vertex-labeled graphs},
while the result of~\cite{KT10b} applies to \MSOii over unlabeled graphs.
There are problems that can be expressed in \MSOL and
not in \MSOii and vice versa (take \textsc{Red-Blue Dominating Set}
vs.\ \textsc{Hamiltonian Cycle}, for instance).
If, however, the set of labels~$L$ is fixed for both,
\MSOL has much weaker expressive power than \MSOii-$L$ 
due to missing edge-set quantifications (see Section~\ref{sec:preliminaries}). 
In particular, note that many of the existing algorithmic
meta-theorems (e.g.~\cite{Cou90,CMR00})
that deal with \MS-definable properties handle unlabeled as well as
(vertex-)labeled inputs with equal ease. However, extending e.g.\ the results 
of~\cite{CMR00} from \MSOL to \MSOii is not possible unless 
$\EXP = \NEXP$.

\item
Finally, because of the free advice, our proof does not need technically
involved machinery such as the simulation of a run 
of a Turing machine encoded in graphs~\cite{KT10b}. 
This makes our proof shorter and exhibits its structure more clearly.
\end{enumerate}
 
After all, Theorem~\ref{thm:mainI} gives a good indication (II) that 
poly-logarithmically unbounded tree-width along
with closure under subgraphs is a strong enough condition for even the 
{\em bare \MSO model-checking to be intractable} (modulo appropriate
complexity-theoretic assumptions).  

Moreover, if we assume that the label set~$L$ is potentially unbounded, 
then we obtain a stronger result (getting us even ``closer'' to \cite{KT10b}):
\begin{theorem}[reformulated as~Theorem~\ref{thm:PH_subexp}]
\MSOL model-checking with vertex labels~$L$ ($L$ depending on the formula size)
is not tractable for a graph class satisfying (a)~and~(b) 
of Theorem~\ref{thm:mainI} unless \emph{every} problem 
in the polynomial-time hierarchy is in $\subexpsubexp{n}$.
\end{theorem}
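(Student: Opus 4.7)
The plan is to generalise the reduction underlying Theorem~\ref{thm:mainI} from a single \NP-complete problem (\threesat) to the $\Sigma_k^p$-complete problem $\SigmaSat{k}$ for each fixed $k\geq 1$, using precisely the extra freedom granted by letting the label set $L$ grow with the formula. In Theorem~\ref{thm:mainI} an instance of \threesat\ of size $n$ is handled by attaching an $L$-labelling (for a fixed $L$) to a densely-unbounded-poly-log-tree-width witness $G_m\in\cG$, supplied as sub-exponential advice, and then testing an \MSOL-formula essentially of shape $\exists X\,\psi(X)$, where $X$ encodes a truth assignment and $\psi$ is a first-order clause check. Our job is to replicate this construction with $k$ alternating quantifier-blocks instead of a single existential one.

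Concretely, I would first enlarge the label set to $L_k$, with $|L_k|=O(k)$ extra labels, one bundle per variable-block, so that the $i$-th variable block of the $\SigmaSat{k}$-instance is encoded on a distinct $L_k$-marked portion of the advice graph $G_m$. I would then use the \MSOL-formula
\[
\varphi_k \;\equiv\; \exists X_1\,\forall X_2\,\exists X_3\,\cdots\,Q_k X_k\ \psi(X_1,\dots,X_k),
\]
where each $X_i$ is constrained by label predicates to be a subset of the vertices carrying the $i$-th bundle, and $\psi$ is (essentially) the first-order clause-check inherited from Theorem~\ref{thm:mainI}; note that both $|\varphi_k|$ and $|L_k|$ depend only on $k$, not on $n$. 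The soundness of $\varphi_k$---that it holds on the labelled $G_m$ iff the original $\SigmaSat{k}$-instance is true---then follows from the natural correspondence between subsets of the $i$-th labelled vertex-block and truth assignments to the $i$-th Boolean variable-block.

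If $\MC{\MSOL}$ on $\cG$ (with $L$ depending on the formula) were in~\XP, evaluating $\varphi_k$ on this labelled advice graph would take time $|G_m|^{h(|\varphi_k|,|L_k|)}=n^{h'(k)}$, hence placing $\SigmaSat{k}$ in polynomial time with sub-exponential advice, i.e.\ $\Sigma_k^p\subseteq\subexpsubexp{n}$; since $k$ is arbitrary, this yields $\PH\subseteq\subexpsubexp{n}$ as required. The principal obstacle is to verify that the gadget machinery of Theorem~\ref{thm:mainI} lifts cleanly to the alternating setting: one must ensure that the $i$-th label bundle forces $X_i$ to range bijectively over the intended $i$-th block of Boolean assignments, with no ``spurious'' assignments leaking through the enlarged labelling, so that the set-quantifier alternations in $\varphi_k$ truly simulate the Boolean alternations of $\SigmaSat{k}$.
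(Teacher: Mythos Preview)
Your approach is essentially the paper's: for each level $k$ of \PH, fix a $\Sigma_k^p$-complete problem expressible in \MSOL with $O(k)$ labels and a formula whose size depends only on $k$, then push it through the interpretation machinery of Theorem~\ref{thm:main}. The paper carries this out with a graph problem $\SigmaCol k$ (an alternating-precolouring variant of \threecolor, shown $\Sigma_k^p$-complete and written as a $k$-alternating \MSOL sentence over graphs with $k+3$ vertex labels) rather than with $\SigmaSat k$ directly. This is convenient because the machinery behind Theorem~\ref{thm:main} interprets arbitrary \emph{labelled graphs} in the advice graph $G_m$ (via Lemmas~\ref{lem:interpretin13regular} and~\ref{lem:keyinterpret}), so starting from a graph problem avoids an ad-hoc encoding of CNF clause structure into the labelled graph; your sketch is silent on where the clauses live. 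Your description of that machinery as ``a first-order clause check'' is also off---what actually happens is two layers of \MSO interpretation, and the $47$ interpretation labels get added on top of the $k+3$ problem labels, giving $|L'|=k+50$.

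One genuine slip: $|G_m|$ is not polynomial in $n$ but only sub-exponential, $|G_m|<\cO\big(2^{n^{5b/c}}\big)$, so the running time $|G_m|^{h(k)}$ comes out as $2^{\cO(n^{1-\varepsilon})}$, not $n^{h'(k)}$. This does not hurt the conclusion---you still land in $\subexpsubexp{n}$---but the equality you wrote is false, and ``polynomial time with sub-exponential advice'' overstates what you get.
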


Finally, as a corollary, we obtain an interesting consequence in the area
of directed graph (digraph) width measures, improving upon~\cite{Ganianetal10}.

\begin{theorem}[reformulated as Theorem~\ref{thm:maindir}]
\label{thm:maindirI}
Assume a (suitable but fixed) finite label set $L$,
and a digraph width measure $\delta$ such that
\begin{enumerate}[a)]
\item
$\delta$ is monotone under taking subdigraphs and $L$-vertex-labelings, and
\item
\MCa{\MSOL}{\cD},
the \MSOL model-checking problem on all $L$-vertex-labeled digraphs from $\cD$,
is in \XP wrt.\ $\delta(D)$ and $\varphi$ as parameters.
\end{enumerate}
Then, unless the non-uniform ETH fails, for all~$d \in \N$ the
tree-width of the class~$U_{\,\delta}(d) := \{U(D) \mid \delta(D) \le d\,\}$,
the underlying undirected graphs of digraphs of $\delta$-width at
most~$d$, is not densely unbounded poly-logarithmically.
\end{theorem}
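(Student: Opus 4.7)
The plan is to argue the contrapositive: assuming that (a) and (b) both hold and that for some fixed $d \in \N$ the tree-width of $\cG := U_\delta(d)$ is densely unbounded poly-logarithmically, I will derive a failure of non-uniform ETH. The first step is to check that $\cG$ meets the hypotheses of Theorem~\ref{thm:mainI}. Closure under subgraphs follows from monotonicity of $\delta$: given $H \subseteq U(D)$ with $\delta(D) \le d$, the subdigraph $D'$ of $D$ obtained by restricting to $V(H)$ and keeping only those arcs whose underlying edge lies in $E(H)$ satisfies $U(D') = H$ and, by (a), $\delta(D') \le d$, so $H \in \cG$. Closure of $\cG$ under $L$-vertex-labelings is analogous, using the second monotonicity property of $\delta$.

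Next, the crux: Theorem~\ref{thm:mainI} cannot be invoked as a pure black box, because its conclusion gives intractability of $\MCa{\MSOL}{\cG}$ on the undirected class $\cG$, whereas assumption (b) yields tractability only on digraphs. I therefore re-open the proof of Theorem~\ref{thm:mainI} and adapt the reduction that, from a hard source problem (the one underlying non-uniform ETH), constructs an MSO instance on a witness graph $G_n \in \cG$. The witness $G_n$ is supplied as part of the non-uniform advice. Since $G_n \in U_\delta(d)$, there exists a digraph $D_n$ with $U(D_n) = G_n$ and $\delta(D_n) \le d$; I append such a $D_n$ to the advice, which only polynomially increases its size and is therefore still compatible with the non-uniform ETH framework.

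The remaining ingredient is a routine syntactic translation of MSO: an $\MSOL$ formula $\psi$ over an undirected labeled graph $G$ becomes an $\MSOL$ formula $\hat\psi$ of size $O(|\psi|)$ over any digraph $D$ with $U(D) = G$ by substituting every atom $\adj(x,y)$ with $\arc(x,y) \lor \arc(y,x)$; vertex labels transfer verbatim. Hence $G \models \psi$ iff $D \models \hat\psi$. Applied to the MSO instance produced by the reduction, assumption (b) decides $D_n \models \hat\psi$ in time $f(\hat\psi, d) \cdot |D_n|^{g(\hat\psi, d)}$, which for fixed $d$ is XP in $|\hat\psi|$ alone. Plugging this complexity back into the reduction of Theorem~\ref{thm:mainI} yields a sub-exponential-with-sub-exponential-advice algorithm for the source problem, i.e.\ a failure of non-uniform ETH.

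The principal obstacle is ensuring that the adaptation preserves the quantitative bounds of the original argument: the advice must remain of sub-exponential size after appending $D_n$, the translated formula-size $|\hat\psi|$ must feed into the same calculation that controls the exponent, and the (fixed) dependence on $d$ hidden inside $f$ and $g$ must not spoil the parameter driving the XP bound. Once these bookkeeping issues are discharged the contradiction is immediate, and the theorem follows.
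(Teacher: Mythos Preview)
Your proposal is correct and follows essentially the same route as the paper: both argue by contradiction, verify that $\cG=U_\delta(d)$ is subgraph-closed via monotonicity of~$\delta$, translate $\adj(x,y)$ to $\arc(x,y)\lor\arc(y,x)$, and re-open the proof of Theorem~\ref{thm:mainI} so that the digraph $D_n$ witnessing $U(D_n)=G_n$ and $\delta(D_n)\le d$ is absorbed into the oracle advice before invoking the \XP algorithm from~(b). The paper's presentation is terser but the argument is the same.
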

Informally, a digraph width measure that is subdigraph-monotone
and algorithmically ``powerful'' is at most a poly-logarithmic factor 
of the tree-width of the underlying undirected graph---cf.~Section~\ref{sec:directed}.

\paragraph*{Paper organization}
In Section \ref{sec:preliminaries} we overview some standard terminology and notation.
Section \ref{sec:ktc} then includes the proof outline
and the core technical concepts:
unbounding tree-width (Definition~\ref{def:denselyunboundedII}), 
the grid-like graphs of Reed and Wood~\cite{RW08} (Proposition~\ref{thm:ReedWood}),
and a new way of interpreting arbitrary graphs in labeled
grid-like graphs of sufficiently high order (Lemma~\ref{lem:keyinterpret}).
These then lead to the proof of our main result, equivalently formulated as
Theorem~\ref{thm:main}, in Section~\ref{sec:tmt}.
Two extensions of the main result appear in Section~\ref{sec:tmt-s};
the first one discussing a stronger collapse of PH (under allowing
non-fixed labeling of graphs), and the second one
considering classes of (just) poly-logarithmically unbounded tree-width,
i.e.\ those which may not be strongly/densely unbounded.
A consequence for directed width measures is then discussed in
Section~\ref{sec:idwm}, followed by concluding remarks in
Section~\ref{sec:concluding}.

\section{Preliminaries}
\label{sec:preliminaries}
\subsection{Graphs}
The graphs we consider in this paper are \emph{simple}, i.e.\ they do not
contain loops and parallel edges.  Given a graph~$G$, we let~$V(G)$ denote
its vertex set and~$E(G)$ its edge set. A \emph{path $P$ of length}
$r>0$ in $G$ is a sequence of vertices $P=(x_0,\ldots, x_r)$ such that all
$x_i$ are pairwise distinct and $(x_i, x_{i+1})\in E(G)$ for every $0\le i<
r$. Let $\cS$ be a family of sets $S_i$ for $i=1,2,\ldots$. Then
the \emph{intersection graph on} $\cS$ is the graph $I(\cS)$ where
$V(I(\cS))=\cS$ and $S_iS_j\in E(I(\cS))$ iff $S_i\cap S_j\not=\emptyset$.

Let $L=\{L_1,\ldots,L_k\}$ be a set of labels. A \emph{$L$-vertex-labeled graph},
or \emph{$L$-graph} for short, is a graph $G$ together with a function
$\lambda \colon V(G) \to 2^L$, assigning each vertex a set of labels, and we write
$(G,\lambda)$ to denote this graph. 
For a graph class $\cG$, we shortly write $\cG^L$ for the class of all
$L$-graphs over $\cG$, i.e.\ $\cG^L$ contains all $(G,\lambda)$
where $G \in \cG$ and $\lambda$ is an arbitrary $L$-vertex-labeling of~$G$.
Note that, unlike in e.g.~\cite{Kre09CSL}, we do not allow labels for edges,
which is in accordance with our focus on \MSO logic of graphs.

\subsection{\MS logic on graphs} 
Monadic second-order logic ($\MS$) is an extension of first-order
logic by quantification over sets. 
On the one-sorted adjacency model of graphs it reads as follows:

\begin{definition}
\label{df:MSO1}
The language of $\MSO$, \emph{monadic second-order logic of graphs},
contains the expressions built from the following elements:
\begin{enumerate}[i)]\parskip-3pt
\item variables $x,y,\ldots$ for vertices, and $X,Y,\ldots$ for sets of vertices,
\item the predicates $x\in X$ and $\adj(x, y)$ with the standard meaning,
\item
   equality for variables, 
   the connectives~$\land, \lor, \lnot, \to$ and the
   quantifiers~$\forall, \exists$\,.
\end{enumerate}
\end{definition}
\smallskip
Note that we do not allow quantification over sets of edges
(as edges are not elements). If we considered the two-sorted incidence graph
model (in which the edges formed another sort of elements),
we would obtain aforementioned $\MSOii$, \emph{monadic second-order logic of graphs with 
edge-set quantification}, which is strictly more powerful
than~$\MSO$, cf.~\cite{EF99}.  Yet even $\MSO$ has
strong enough expressive power to describe many common problems. 

\begin{example}\label{ex:3colorMso1}
The \threecolor\ problem can be expressed in $\MSO$ as follows:
\begin{align*}
 \exists V_1,V_2,V_3 \allowbreak \big[\, &
 	\forall v \> (v\in V_1\vee v\in V_2\vee v\in V_3)
 \,\wedge \\ &
 \bigwedge\nolimits_{i=1,2,3}\>
   \forall v,w \>
 	(v\not\in V_i\vee w\not\in V_i\vee \neg\adj(v,w))
 \>\big]
\end{align*}
\end{example}

The $\MSO$ logic can naturally be extended to $L$-graphs. The
\emph{monadic second-order logic} on $L$-vertex-labeled graphs, denoted by
$\MSOL$, is the natural extension of $\MSO$ with unary predicates $L_i(x)$ for
each label $L_i\in L$, such that $L_i(x)$ holds iff $L_i\in\lambda(x)$.

\subsection{Parameterized complexity and $\MSO$ model-checking}
\label{def:param}
Throughout the paper we are interested in the problem of checking
whether a given input graph satisfies a property specified by a fixed
formula $\varphi$. This problem can be thought of as an  instance of a problem
parameterized by $\varphi$, as studied in the field of  
\emph{parameterized complexity} (see e.g.~\cite{fg06} for a background on parameterized complexity).

A parameterized problem~$Q$ is a subset of~$\Sigma \times \mathbb{N}_0$, where~$\Sigma$
is a finite alphabet and~$\mathbb{N}_0 = \mathbb{N} \cup \{0\}$. 
A parameterized problem~$Q$ is said to be \emph{fixed-parameter tractable}
if there is an algorithm that given~$(x,k) \in \Sigma \times \mathbb{N}_0$ decides whether~$(x,k)$
is a yes-instance of~$Q$ in time~$f(k) \cdot p(|x|)$ where~$f$ is some
computable function of~$k$ alone, $p$ is a polynomial and~$|x|$ is the size measure of
the input. 
The class of such problems is denoted by \FPT.
The class \XP is the class of parameterized problems that admit 
algorithms with a run-time of~$\ca O(|x|^{f(k)})$ for some computable~$f$,
i.e.\ polynomial-time for every fixed value of~$k$.

We are dealing with a parameterized {\em model-checking problem}
\MC{\MSO} where $\cC$ is a class of graphs;
the task is to decide, given a graph $G\in\cC$ and a formula
$\varphi\in\MSO$, whether $G\models\varphi$. 
The parameter is $k=|\varphi|$, the size of the formula $\varphi$.
We actually consider the labeled variant \MC{\MSOL} for $\cC$ being a class of
$L$-graphs.

\subsection{Interpretability of logic theories}
One of our main tools is the classical {interpretability} of logic theories~\cite{Rab64}
(which in this setting is analogical to transductions as used e.g.\ by
Courcelle, cf.~\cite{CE12}).
To describe the simplified setting, assume 
that two classes of {\em relational structures} $\mathscr K$ and $\mathscr L$ are given.
The basic idea of an {\em interpretation} $I$ of the {theory} $\thmso(\mathscr K)$
into $\thmso(\mathscr L)$ is to transform \MS formulas $\varphi$ over $\mathscr K$
into \MS formulas $\varphi^I$ over $\mathscr L$
in such a way that ``truth is preserved'':
\begin{itemize}
\item
First, one chooses a formula $\alpha(x)$ intended to define in each structure
$G\in\mathscr L$ a set of individuals (new domain)
$G[\alpha]:=\{a:a\in dom(G)\mbox{ and }G\models\alpha(a)\}$, 
where $dom(G)$ denotes the set of individuals (domain) of $G$.
\item
Then, one chooses for each $s$-ary relational symbol $R$ from $\mathscr K$ 
a formula $\beta^R(x_1,\dots,x_s)$, 
with the intention to define a corresponding relation 
$G[\beta^R]:=\{(a_1,\dots,a_s) \colon\allowbreak a_1,\dots,a_s\in dom(G)$ and 
$G \models\beta^R(a_1,\dots,a_s)\}$. 
With these formulas one defines for each $G\in\mathscr L$ the relational
structure $G^I := \big(G[\alpha], G[\beta^R], \dots\big)$
intended to correspond with structures in $\mathscr K$.
\item
Finally, there is a natural way to
translate each formula $\varphi$ (over $\mathscr K$) into a formula $\varphi^I$
(over $\mathscr L$),
by induction on the structure of formulas. 
The atomic ones are substituted by corresponding chosen formulas 
(such as $\beta^R$) with the corresponding variables. 
Then one proceeds via induction simply as follows:
\begin{eqnarray*}
(\neg \varphi)^I ~\mapsto~ \neg(\varphi^I) &, &\quad
 (\varphi_1 \wedge \varphi_2)^I ~\mapsto~ (\varphi_1)^I \wedge (\varphi_2)^I, 
\\
\left(\exists x^{\vbox to1ex{\vfill}}\, \varphi(x)\right)^I ~\mapsto~ 
        \exists y \left(\alpha(y) \wedge \varphi^I(y)\right) &,&\quad
\left(\exists X^{\vbox to1ex{\vfill}}\, \varphi(X)\right)^I ~\mapsto~ 
        \exists Y\, \varphi^I(Y).
\end{eqnarray*}
\end{itemize}
The whole concept is shortly illustrated in by the scheme
in Figure~\ref{fig:interpret}.
\begin{figure}[tb]
$$
      \begin{matrix}
        \varphi\in \mbox{\MS over $\mathscr K$}\cr H\in\mathscr K
	\cr~\cr G^I\cong H\cr (\mbox{s.t. }G^I\models\varphi)
      \end{matrix}
\quad
\begin{matrix}
  I\cr \raise1ex\hbox{ $-\!\!\!-\!\!\!-\!\!\!-\!\!\!\longrightarrow$ }
	\medskip
	\cr \lower1ex\hbox{$I$}\cr \hbox{$\longleftarrow\!\!\!-\!\!\!-\!\!\!-\!\!\!-$ }
\end{matrix}
\quad
\begin{matrix}
  \varphi^I\in \mbox{\MS over $\mathscr L$}\cr G\in\mathscr L
	\cr~\cr G\cr (\mbox{s.t. }G\models\varphi^I)
\end{matrix}
$$
\caption{A basic scheme of an interpretation of $\thmso(\mathscr K)$
into $\thmso(\mathscr L)$.}
\label{fig:interpret}
\end{figure}

\begin{definition}[Interpretation between theories]\label{df:interpret}
Let $\mathscr K$ and $\mathscr L$ be classes of relational structures.
Theory $\thmso(\mathscr K)$ is {\em interpretable} in theory $\thmso(\mathscr L)$
if there exists an interpretation $I$ as above such that the following two
conditions are satisfied:
\begin{enumerate}[i)]\parskip-3pt
\item
For every structure $H\in\mathscr K$, there is  $G\in\mathscr L$ 
        such that $G^I\cong H$, and
\item
for every $G\in\mathscr L$, the structure $G^I$ is isomorphic to 
        some structure of $\mathscr K$.
\end{enumerate}
Furthermore, $\thmso(\mathscr K)$ is {\em efficiently interpretable} in
$\thmso(\mathscr L)$ if 
the translation of each $\varphi$ into $\varphi^I$
is computable in polynomial time
and the structure $G\in\mathscr L$, where $G^I\cong H$,
can be computed from any $H\in\mathscr K$ in polynomial time.
\end{definition}

\subsection{Exponential-Time Hypothesis}

The \emph{Exponential-Time Hypothesis (ETH)}, formulated in~\cite{IPZ01},
states that there exists no algorithm that can solve $n$-variable
\textsc{3-SAT} in time $2^{o(n)}$. It was shown in~\cite{IPZ01} that the
hypothesis can be formulated using one of the many equivalent problems
(e.g. \textsc{$k$-Colourability} or \textsc{Vertex Cover})---i.e. sub-exponential complexity
for one of these problems would imply the same for all the others.

ETH can be formulated in the {\em non-uniform} version: There is no family of
algorithms (one for each input length) which can solve $n$-variable
\textsc{3-SAT} in time $2^{o(n)}$. In theory of computation literature,
``non-uniform algorithms'' are often referred to as ``fixed-sized input
circuits'' where for each length of the input a different circuit is
used. Yet another way of thinking about non-uniform algorithms is as having
an algorithm that is allowed to receive an oracle advice, which depends
only on the length of the input. As mentioned in~\cite{CSH08}, the results
of~\cite{IPZ01} hold also for the non-uniform~ETH.

\section{Key Technical Concepts}
\label{sec:ktc}

\paragraph*{Proof outline}
We are going to show via a suitable multi-step reduction, that
the potential tractability of \MSOL model-checking on our graph class
$\cG$ (whose tree-width is densely unbounded poly-logarithmically),
implies sub-exponential time algorithms for problems which are not
believed to have one (cf.~ETH).
The success of the reduction, of course, rests on the assumptions of $\cG$
being subgraph-closed and of unbounded tree-width.
So, at a high level, our proof technique is similar to that of Kreutzer and Tazari. 

However, there are some crucial differences.
While \cite{KT10b} uses the effectiveness assumption in
Definition~\ref{def:strongly_unbounded}.\,ii and some further technically
involved algorithms to construct a ``skeleton'' in the class $\cC$
suitable for their reduction,
in our reduction we will obtain a corresponding labeled skeleton in the class
$\cG^L$ ``for free'' from an oracle advice function
which comes with the non-uniform computing model.
That is why our complete proof is also significantly shorter than that in~\cite{KT10b}.
Additionally, our arguments shall employ a result on strong edge colourings of
graphs in order to ``simulate'' certain edge sets within the \MSOL language,
thus avoiding the need for a more expressive logic such as \MSOii.

\subsection{Unbounding Tree-width}
Following Definition~\ref{def:strongly_unbounded}, we aim to formally
describe what it means to say that the tree-width of a graph class is not bounded
by a function~$g$.
Recall (see also \cite{Kre09CSL,KT10b})
that it is not enough just to assume $\tw(G)>g(|V(G)|)$ for some
sporadic values of $\tw$ with huge gaps between them, but a reasonable
density of the surpassing tree-width values is also required.
Hence we suggest the following definition as a weaker alternative
to Definition~\ref{def:strongly_unbounded}:

\begin{definition}[Densely unbounded tree-width]
\label{def:denselyunbounded}
For a graph class $\cG$, we say that the tree-width of $\cG$ is
{\em densely unbounded by a function $g$} if there is a constant~$\gamma>1$
such that, for every $m\in\N$, there exists a graph $G\in\cG$
whose tree-width is $\tw(G)\geq m$ and $|V(G)|<\cO\big(g^{-1}(m^\gamma)\big)$.
The constant $\gamma$ is called the {\em gap-degree} of this property.
\end{definition}

\begin{remark}
Comparing to Definition~\ref{def:strongly_unbounded}
one can easily check that if the tree-width of a class $\cG$ is strongly
unbounded by a function $g$, then the tree-width is densely unbounded by $g$
with the same gap-degree,
and the witnessing graphs $G$ of Definition~\ref{def:denselyunbounded}
can be computed for all $m$ efficiently---in sub-exponential time wrt.~$m$.
Hence our definition is weaker in this respect.
\end{remark}

For simplicity, we are interested in graph classes whose tree-width is densely
unbounded by every poly-logarithmic function of the graph size.
That is expressed by the following simpler definition:

\begin{definition}[Densely unbounded tree-width II] \label{defn:unbounded_polylog}
\label{def:denselyunboundedII}
For a graph class $\cG$, we say that the tree-width of $\cG$ is
{\em densely unbounded poly-logarithmically} if it is densely unbounded by
$\log^c m$ for every $c\in\N$. 

That is, for every~$c >1$ the following holds:
for all $m \in \N$ there exists a graph $G \in \cG$
whose tree-width is $\tw(G) \geq m$ and with size $|V(G)|<\cO\big(2^{m^{1/c}}\big)$.
(The gap-degree becomes irrelevant in this setting.)
\end{definition}

\subsection{Grid-like graphs}
The notion of a grid-like minor has been introduced by Reed and Wood in~\cite{RW08},
and extensively used by Kreutzer and Tazari~\cite{KT10a,KT10b}.
In what follows, we avoid use of the word ``minor'' in our definition of the same
concept, since ``$H$-minors'' where~$H$ is grid-like are always found
as subgraphs of the target graph, which might cause some confusion.

\begin{definition}[Grid-like \cite{RW08}]
\label{def:grid-like}
A graph $G$ together with a collection $\cP$ of paths,
formally the pair $(G,\cP)$, is called {\em grid-like}
if the following is true:
\begin{enumerate}[i)]\parskip-3pt
\item
$G$ is the union of all the paths in $\cP$,
\item\label{it:path2v}
each path in $\cP$ has at least two vertices, and
\item
the {\em intersection graph} $I(\cP)$ of the path collection is bipartite.
\end{enumerate}
The {\em order} of such grid-like graph $(G,\cP)$ is the maximum integer $\ell$ such
that the intersection graph~$I(\cP)$ contains a $K_\ell$-minor.
When convenient, we refer to a grid-like graph simply as to~$G$.
\end{definition}
Note that the condition (\ref{it:path2v}) is not explicitly stated 
in~\cite{RW08}, but its validity implicitly follows from the point to get a
$K_\ell$-minor in $I(\cP)$, cf.~Theorem~\ref{thm:ReedWood}.
Since the traditional square (and hexagonal, too) grids are grid-like
with the horizontal and vertical paths forming the collection $\cP$,
the new concept of having a grid-like subgraph generalizes the traditional
concept of having a grid-minor.
See also Figure~\ref{fig:grid}.

\begin{figure}[tb]
  \centering
\begin{tikzpicture}[scale=0.75]
\tikzstyle{every node}=[draw, shape=circle, minimum size=3pt,inner sep=0pt, fill=black]

  \foreach \i in {0,...,6}
    \foreach \j in {0,...,3}
      \node at (\i,\j) (x\i\j) {};
  \foreach \i in {0,...,6}
    \draw (x\i0)--(x\i3);
  \foreach \j in {0,...,3}
    \draw (x0\j)--(x6\j);

\end{tikzpicture}
  \caption{The square $4\times7$-grid is grid-like of order $5$
	(where $\cP$ is the collection of the horizontal and the vertical paths
	and $I(\cP)\simeq K_{4,7}$).}
  \label{fig:grid}
\end{figure}

One can easily observe the following:

\begin{proposition}
\label{prop:grid-like}
Let $(G,\cP)$ be a grid-like graph.
Then the collection $\cP$ can be split into $\cP=\cP_1\cup\cP_2$
such that each $\cP_i$, $i=1,2$, consists of pairwise disjoint paths.
Consequently, the maximum degree in $G$ is $\Delta(G)\leq4$.
\end{proposition}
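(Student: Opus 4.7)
The plan is to read off both assertions directly from bipartiteness of the intersection graph $I(\cP)$. Since $I(\cP)$ is bipartite, its vertex set, which is the collection $\cP$ itself, can be split into two independent sets $\cP_1$ and $\cP_2$. By definition of $I(\cP)$, two paths $P, P' \in \cP$ are adjacent in $I(\cP)$ if and only if they share at least one vertex. Hence an independent set in $I(\cP)$ is precisely a subcollection of pairwise vertex-disjoint paths. So the partition $\cP=\cP_1\cup\cP_2$ has the property required by the first part of the statement.

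For the degree bound, I would fix an arbitrary vertex $v \in V(G)$ and count the edges of $G$ incident to $v$ via the two classes separately. Because the paths in $\cP_1$ are pairwise vertex-disjoint, $v$ lies on at most one path $P \in \cP_1$; within such a simple path $v$ has at most two neighbours, so $\cP_1$ contributes at most $2$ edges at $v$. The same argument applied to $\cP_2$ yields at most $2$ further edges. Since $G$ is the union of the paths in $\cP=\cP_1\cup\cP_2$, every edge incident to $v$ arises this way, and we conclude $\deg_G(v) \le 4$, hence $\Delta(G)\le 4$.

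There is essentially no obstacle here: the entire content of the proposition is packaged into the bipartiteness clause (iii) of Definition~\ref{def:grid-like} together with condition (ii) guaranteeing that each path is an actual path (so that the within-path degree bound of $2$ applies). The only small point worth explicitly noting in the write-up is that one uses condition (ii), not just (i), to ensure that a path of $\cP$ truly contributes at most $2$ to the degree of any of its vertices; trivial one-vertex ``paths'' are excluded by (ii), and the standard path definition given in Section~\ref{sec:preliminaries} takes care of the rest.
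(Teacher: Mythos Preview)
Your argument is correct and is exactly the intended one: the paper states Proposition~\ref{prop:grid-like} without proof, merely as an easy observation, and the natural justification is precisely to take the bipartition of $I(\cP)$ as $\cP_1\cup\cP_2$ and then bound the degree at each vertex via the two classes. One tiny remark: your appeal to condition~(ii) is not actually needed for the degree bound, since a hypothetical one-vertex path would contribute no edges anyway; the relevant input is only that paths are simple (so the within-path degree is at most~$2$) together with condition~(i).
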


The next result is crucial for our paper 
(while we do not require constructability as in~\cite{KT10b}):
\begin{theorem}[Reed and Wood \cite{RW08}]
\label{thm:ReedWood}
Every graph with tree-width at least $c\ell^4\sqrt{\log\ell}$ contains 
a subgraph which is grid-like of order $\ell$, for some constant $c$.
\end{theorem}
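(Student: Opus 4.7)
My plan is to use the bramble--treewidth duality to convert the treewidth hypothesis into a large family of pairwise touching connected subgraphs, and then to extract from these a path collection whose intersection graph is (almost) bipartite and contains a $K_\ell$-minor. First I would invoke the Seymour--Thomas theorem: treewidth at least $k$ is equivalent to the existence of a bramble of order at least $k+1$, i.e.\ a family $\mathcal{B}$ of pairwise touching (sharing a vertex or joined by an edge) connected subgraphs such that no set of fewer than $k+1$ vertices meets all of them. Choosing $k$ on the order of $c\ell^4 \sqrt{\log\ell}$ therefore yields such a bramble of the desired order.

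Next, I would turn $\mathcal{B}$ into a path collection. For each $B \in \mathcal{B}$ fix a spanning tree $T_B$, and for each ordered pair $B, B'\in\mathcal{B}$ that touch, choose inside $B$ a short path $P_{B,B'}$ realising the touch with $B'$ (extended by one edge if needed). Taking the union $\cP$ of all such paths, every touching pair of distinct bramble elements contributes an intersection in $I(\cP)$; thus the touching graph of $\mathcal{B}$ embeds into $I(\cP)$, and because $\mathcal{B}$ is a bramble this touching relation is complete, so $I(\cP)$ contains a very large clique. Splitting overlong paths into smaller pieces (each with at least two vertices) then ensures condition~(ii) of Definition~\ref{def:grid-like}.

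The intersection graph is not yet bipartite, but I would apply the standard random $2$-colouring trick: colour the paths uniformly at random with two colours and discard intersections between monochromatic pairs. In expectation at least half of the intersection edges survive, and the resulting bipartite subgraph of $I(\cP)$ still has large average degree. Combined with the Kostochka--Thomason theorem that a graph of average degree $d$ contains a $K_t$-minor with $t = \Omega(d/\sqrt{\log d})$, the $\sqrt{\log\ell}$ factor in the threshold $c\ell^4\sqrt{\log\ell}$ is exactly the price paid for converting density into clique-minor order.

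The main obstacle is the quantitative bookkeeping: the bramble order $k$, the size $|\cP|$ (at most quadratic in $|\mathcal{B}|$), the surviving density after the random pruning, and the Kostochka--Thomason conversion all contribute polynomial and logarithmic factors that must fit together. Matching these against the target while ensuring the path collection is genuinely a subgraph grid-like structure (rather than merely a minor) of order $\ell$ is where the quartic exponent in $\ell$ arises, and this careful accounting is where the real work of the proof lies.
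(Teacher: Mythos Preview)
The paper does not prove this theorem at all; it is quoted from Reed and Wood~\cite{RW08} and used as a black box. So there is no ``paper's own proof'' to compare against, only the original source.

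That said, your sketch contains a genuine gap that would make the argument fail as stated. The crucial point is condition~(iii) of Definition~\ref{def:grid-like}: the intersection graph $I(\cP)$ of the path collection itself must be \emph{bipartite}, which (cf.\ Proposition~\ref{prop:grid-like}) is equivalent to splitting $\cP$ into two subcollections each consisting of pairwise \emph{vertex-disjoint} paths. Your random $2$-colouring step does not achieve this. Colouring the paths and ``discarding intersections between monochromatic pairs'' produces a bipartite \emph{subgraph} of $I(\cP)$, but the paths of the same colour still physically intersect in~$G$, so $I(\cP)$ is unchanged and remains non-bipartite. To satisfy the definition you would have to reroute or prune the paths so that same-coloured ones are genuinely disjoint, and nothing in your bramble-based construction provides any control for that; paths $P_{B,B'}$ and $P_{B,B''}$ both live inside the same bramble element~$B$ and will typically overlap heavily.

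The actual Reed--Wood argument proceeds quite differently: it uses separator/linkedness properties of graphs of large tree-width together with Menger's theorem to produce two large systems of paths, each system internally vertex-disjoint by construction, linking two prescribed vertex sets. The bipartiteness of $I(\cP)$ is then automatic, and the Kostochka--Thomason bound is applied to the (dense) bipartite intersection graph between the two systems to extract the $K_\ell$-minor. The $\ell^4\sqrt{\log\ell}$ threshold comes from combining the quadratic loss in building the two path systems with the $\sqrt{\log}$ loss in Kostochka--Thomason, not from a bramble-to-paths conversion.
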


\subsection{\MSO interpretation on grid-like graphs}

Now we prove the core new technical tool of our paper;
showing how the subgraphs of $I(\cP)$ of any grid-like graph $(G,\cP)$ can be
efficiently $\MSO$-interpreted in $G$ itself with a suitable vertex labelling.
First, we state a useful result about strong edge colourings of graphs---a
{\em strong edge-colouring} is an assignment of colours to the edges of a graph 
such that no path of length three contains the same colour twice.
\begin{theorem}[Cranston~\cite{Cra06}]
\label{thm:strongedgecolouring}
Every graph of maximum degree $4$ has a strong edge-colouring using at most $22$ colours.
This colouring can be found with a polynomial-time algorithm.
\end{theorem}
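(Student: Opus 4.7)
The plan is to reformulate strong edge-colouring as proper vertex-colouring of the square $L(G)^2$ of the line graph, where two edges of $G$ are adjacent in $L(G)^2$ iff they share an endpoint or are joined by an edge of $G$. For $\Delta(G)=4$, each edge sees at most $2(\Delta-1)=6$ edges sharing an endpoint and at most $2(\Delta-1)^2=18$ edges at distance~$2$ in $L(G)$, so $\Delta(L(G)^2)\le 24$. A direct greedy colouring therefore uses at most $25$ colours, and the whole point of the theorem is to save three more.

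First I would invoke a Brooks-type improvement: for graphs $G$ with $\Delta(G)=4$ large enough to be interesting, $L(G)^2$ is neither a complete graph on $25$ vertices nor an odd cycle, so Brooks' theorem already yields $\chi(L(G)^2)\le 24$. To push the bound from $24$ down to $22$, I would argue via a minimum counterexample~$G$. In such a $G$, any edge~$e$ fails to receive a colour from a palette of $22$ after the rest of $G$ has been coloured, which forces the $24$ square-neighbours of~$e$ to exhibit at least $22$ distinct colours. This already severely constrains the local structure around~$e$: both endpoints of $e$ must have degree~$4$, and the second neighbourhoods must avoid certain sparse patterns.

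The main technical step is then to isolate a short list of reducible configurations — for example a $(4,4)$-edge with a common vertex in its second neighbourhood, or a pair of close edges forming a short cycle in $L(G)$ — and to show by local recolouring, i.e.\ Kempe-style exchanges in $L(G)^2$, that any such configuration allows one to free a colour for~$e$, contradicting minimality. The remaining work is to prove that \emph{no} graph with $\Delta(G)\le 4$ can avoid every configuration in the list; this is the standard job of a discharging argument, assigning each vertex an initial charge depending on its degree, redistributing along edges according to carefully chosen rules, and comparing the resulting global charge against the handshake identity to derive a contradiction.

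The main obstacle will be calibrating exactly this list of reducible configurations and the matching discharging rules so that $22$ colours — rather than $23$ or $24$ — suffice; this tight numerical bound is the technical heart of Cranston's argument and requires a careful enumeration of all local structures at distance $\le 2$ from an uncoloured edge. Polynomial-time execution is essentially free: each reducible configuration has bounded size and can be located and resolved in $O(|V(G)|)$ time, and at most $|E(G)|$ successive reductions are performed before the entire graph is coloured.
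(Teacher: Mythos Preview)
The paper does not prove this theorem at all: it is quoted as an external result of Cranston~\cite{Cra06} and used purely as a black box inside the proof of Lemma~\ref{lem:keyinterpret}. There is therefore no ``paper's own proof'' to compare your proposal against.

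Your sketch is a reasonable high-level caricature of how results of this flavour are obtained (reduce to proper colouring of $L(G)^2$, observe the crude degree bound, then lower it via reducible configurations plus discharging), but as written it is only a plan, not a proof. The substantive content of Cranston's theorem lies precisely in the part you label ``the main obstacle'': identifying the specific reducible configurations and discharging rules that force the bound down to~$22$. You have not supplied any of those configurations or rules, nor any argument that they cover all cases, so at present this is a description of the \emph{shape} of the argument rather than the argument itself. For the purposes of the present paper none of that matters, since only the statement (and the existence of a polynomial-time algorithm) is used.
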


For  a class of grid-like graphs~$\cG$,
let~$\Isups(\cG) = \{H: H\subseteq I(\cP),\> (G,\cP)\in\cG\}$ 
denote the class of all subgraphs of their intersection graphs.
Our core tool is the following lemma.
\begin{lemma}
\label{lem:keyinterpret}
Let $\cG$ be any class of grid-like graphs.
There exists a fixed finite set $L$ of labels, with~$|L|\geq 47$,
and a graph class $\cI \supseteq \Isups(\cG)$, 
such that the following holds.
The $\MSO$ theory of $\cI$ has an efficient interpretation
in the $\MSO$ theory of \,$\cG^L$\,---the class of all $L$-vertex-labeled graphs over~$\cG$.
Stated differently, any $H \subseteq I(\cP)$ where $(G,\cP) \in \cG$ is interpreted in some 
$L$-graph of~$G$.
\end{lemma}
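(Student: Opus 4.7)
The plan is to encode three pieces of information on $G$ via vertex labels: (i) a distinguished ``representative'' vertex on each path $P\in\cP$; (ii) which edges of $G$ belong to each of the two path-classes $\cP_1,\cP_2$; and (iii) a marker telling which intersection vertices of $G$ correspond to edges of $H$. By Proposition~\ref{prop:grid-like}, $\Delta(G)\leq 4$ and $\cP=\cP_1\cup\cP_2$ with the paths of each $\cP_i$ pairwise vertex-disjoint. First I would apply Cranston's Theorem~\ref{thm:strongedgecolouring} to obtain a strong edge-colouring $\chi\colon E(G)\to\{1,\ldots,22\}$ in polynomial time. Then I would fix the label set $L=\{R_1,R_2,M\}\cup\{C_c^{(i)} : 1\leq c\leq 22,\ i\in\{1,2\}\}$, so that $|L|=47$. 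Given $H\subseteq I(\cP)$, label $G$ as follows: mark one arbitrarily chosen vertex $r_P$ of each path $P\in\cP_i$ by $R_i$; for every edge $uv$ of colour $c$ lying on a path of $\cP_i$, put label $C_c^{(i)}$ on both $u$ and $v$; finally, for every vertex $v\in V(G)$ that lies on some $P_1\in\cP_1$ and some $P_2\in\cP_2$ with $\{P_1,P_2\}\in E(H)$, put the marker label $M$ on $v$.

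The key technical observation driving the interpretation is that, thanks to the strong edge-colouring property, for any adjacent $u,v$ both carrying $C_c^{(i)}$, the edge $uv$ itself must be the unique class-$i$ edge of colour $c$ incident to each. Otherwise the two distinct class-$i$ edges of colour $c$, incident respectively to $u$ and $v$, together with $uv$ would form a walk of at most three edges containing two edges of the same colour, contradicting the strong edge-colouring. Hence in $\MSO$ on the labeled $G$, the formula $\mathrm{ClassEdge}_i(u,v)\equiv\adj(u,v)\wedge\bigvee_{c=1}^{22}(C_c^{(i)}(u)\wedge C_c^{(i)}(v))$ expresses exactly the edges of the class-$i$ subgraph. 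Because the paths of $\cP_i$ are pairwise vertex-disjoint, their connected components in the class-$i$ subgraph coincide with the paths themselves; combined with the standard $\MSO$ definability of reachability via the ``smallest closed-under-adjacency set'' quantification over a set variable, this yields a formula $\mathrm{OnPath}_i(v,r)$ expressing that $v$ lies on the $\cP_i$-path with representative $r$.

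With these building blocks the interpretation $I$ is almost syntactic: set $\alpha(x):=R_1(x)\vee R_2(x)$ and take $\beta^E(x,y)$ to say that $x$ and $y$ are representatives of two different classes and that some vertex labeled $M$ lies on both of the paths represented (formally, $\exists v\,[M(v)\wedge((R_1(x)\wedge R_2(y)\wedge\mathrm{OnPath}_1(v,x)\wedge\mathrm{OnPath}_2(v,y))\vee\text{symm.})]$). Then representatives biject with the vertices of $I(\cP)$, and the $M$-labels pick out exactly the edges of $H$, so $G^I\cong H$. Defining $\cI$ as the class of all structures of the form $G^I$ obtained from arbitrary $L$-labelings of graphs $G\in\cG$ automatically ensures $\cI\supseteq\Isups(\cG)$ and that both clauses of Definition~\ref{df:interpret} are met. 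Efficiency follows because $\varphi\mapsto\varphi^I$ is a polynomial-time syntactic substitution and the labeling is computed in polynomial time (the main cost being Cranston's algorithm). The main obstacle the construction has to overcome is the inability of $\MSO$ to quantify over edge sets: the strong edge-colouring trick compensates by simultaneously encoding each path-class and making its edges individually recognisable, using only the bounded vertex-label set~$L$.
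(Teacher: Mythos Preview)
Your approach is essentially identical to the paper's: the same $22+22+3=47$ labels built from Cranston's strong edge-colouring applied separately to the two path classes, the same representative-plus-marker scheme, and the same connectivity-based adjacency formula. Two small points need tightening. First, ``arbitrarily chosen'' representatives can collide: a vertex lying on both a $\cP_1$-path and a $\cP_2$-path could be picked for each, collapsing two domain elements into one; the paper invokes Hall's theorem (using $|V(\cP')|\geq 2\max\{p,q\}\geq p+q$ since every path has $\geq 2$ vertices) to guarantee a system of \emph{distinct} representatives. Second, you mark a representative for \emph{every} path in $\cP$, so your domain is $V(I(\cP))$ rather than $V(H)$; to interpret a non-spanning $H\subseteq I(\cP)$ you must restrict the $R_i$ (and the $C_c^{(i)}$) labels to paths that actually lie in $V(H)$, exactly as the paper does with its white/black versus ``irrelevant'' distinction.
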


\begin{proof}
Note that the use of a class $\cI$ in the statement of the lemma is
only a technicality related to (ii) of Definition~\ref{df:interpret}.
We are actually interested only in interpreting the graphs from $\Isups(\cG)$,
and $\cI$ then simply contains all the graphs that (also accidentally)
result from the presented interpretation.

Hence we choose an arbitrary $(G,\cP)\in\cG$ and $H\subseteq I(\cP)$.
The task is to find a vertex labeling $\lambda_H\colon V(G)\to2^L$
such that $H$ has an efficient $\MSO$ interpretation 
in the labeled graph $(G,\lambda_H)\in \cG^L$.
By Theorem~\ref{thm:strongedgecolouring} (cf.\ also Proposition~\ref{prop:grid-like}),
let $\gamma\colon E(G) \to \{1, \ldots, 22\}$ be a strong edge-colouring of the
chosen graph $G$.
Let $\cP = \cP_w\cup\cP_b$ be the bipartition of the paths forming $G$
corresponding to the partite sets of~$I(\cP)$.
We call the paths of $\cP_w\cap V(H)$ ``white'' 
and those of $\cP_b\cap V(H)$ ``black''.
The remaining paths not in the vertex set of $H$ are irrelevant.
The edges of white/black paths are also called white/black, respectively,
with the understanding that some edges of $G$ may be both white and black.
For $x\in V(G)$,
we let $w(x)=\{\gamma(f): f\mbox{ is a white edge incident to $x$}\}$
and $b(x)=\{\gamma(f): f\mbox{ is a black edge incident to $x$}\}$.
According to Proposition~\ref{prop:grid-like},
$|w(x)|\leq2$ and $|b(x)|\leq2$.

The key observation, derived directly from the definition of a strong
edge-colouring, is that any edge $f=xy\in E(G)$ is a white edge
iff $w(x)\cap w(y)\not=\emptyset$, and analogously for black edges.
This allows us to speak separately about the white and black edges
in $G$ using only the language of $\MSOL$.
Another easy observation is that the vertex sets of the paths in $\cP$
have a system of distinct representatives by Hall's theorem. For if $\cP' \subseteq \cP$
and~$\cP'$ contains~$p$ white paths and~$q$ black paths, 
then $|V(\cP')| \geq 2 \cdot \max \{p,q\} \geq p+q$, proving Hall's criterion.
We assign a marker $r(x)\in\{\emptyset,w,b\}$ to each $x\in V(G)$ such that
$r^{-1}(w)$ is the set of the representatives of white paths
and $r^{-1}(b)$ is that of black paths
(i.e., $r^{-1}(\emptyset)$ are not representatives).
Finally, we assign another vertex marker $m(x)\in\{0,1\}$ to each vertex $x\in V(G)$
such that $m(x)=1$ iff $x \in V(P_1) \cap V(P_2)$ where 
$P_1,P_2\in V(H)\subseteq\cP$ and $\{P_1,P_2\}\in E(H)$.

Hence the label set $L$ consists of $22$ ``light'' colours coming from
$\gamma$ values on white paths, another $22$ ``dark'' colours 
from black paths, and the three singletons $w,b,m$ described above
(altogether $47$ binary labels).
Note that the actual size of the needed label space over $L$ is even much smaller;
at most~$\big[{22\choose2}+22+1\big]^2 \cdot 3 \cdot 2 < 2^{19}$.
The label $\lambda_H(x)$ of a vertex $x\in V(G)$ then contains the disjoint
union $w(x)\dot\cup\,b(x)$, the label $r(x)$ if $\not=\emptyset$,
and finally $m$ if $m(x)=1$.

Now, the interpretation of $H$ in $(G,\lambda_H)$ is simply as follows:
The domain, i.e.\ the vertex set of $H$, is identified within $V(G)$
by a predicate $\alpha(x)$ expressing that $``r(x)=w\vee r(x)=b"$ in $\MSOL$.
In formal logic language (cf.\ Section~\ref{sec:preliminaries}), 
it is $L_w(x)\vee L_b(x)$.
The relational symbol $\prebox{adj}$ of $H$ is then replaced,
for $x,y\in V(G)$ s.t.\ $\alpha(x)\wedge\alpha(y)$, with
\begin{align*}
\beta^{\prebox{adj}}(x,y) &\,\equiv\, \exists z
	\,\big[ ``m(z)=1" \wedge \varrho(x,z)\wedge \varrho(y,z) \big], \mbox{ where}
\\
\varrho(t,z) &\,\equiv\, \big[ ``r(t)=w" \to \prebox{con}_w(t,z) \big]
	\wedge \big[ ``r(t)=b" \to \prebox{con}_b(t,z) \big]
\end{align*}
and where $\prebox{con}_w$ ($\,\prebox{con}_b\,$) routinely expresses
in $\MSOL$ the fact that $t,z$ belong to the same component induced
by white (black) edges in~$G$.
Precisely,
\begin{align*}
\prebox{con}_w(t,z) \,\equiv\, \forall Z & \big[
	\, z\in Z\wedge t\not\in Z \,\to\, \exists\, u,v
\\ & \big( v\in Z\wedge u\not\in Z\wedge \prebox{adj}(u,v)\wedge
			``w(u)\cap w(v)\not=\emptyset" \big)\big]
.\end{align*}
Clearly, in this interpretation $(G,\lambda_H)^I\simeq H$ thanks to our
choice of $\lambda_H$.
This completes the proof.
\end{proof}

Lemma~\ref{lem:keyinterpret} will be coupled with the next technical
tool of similar flavor used in our previous~\cite{Ganianetal10}.
We remark that its original formulation was even stronger, making the target
graph class planar, but we are content with the following weaker formulation here.
We call a graph $G$ {\em$\{1,3\}$-regular} if all the vertices of $G$
have degree either one or three.

\begin{lemma}[{\cite[in Theorem 5.5]{Ganianetal10}}]
\label{lem:interpretin13regular}
The $\MSO$ theory of all simple graphs has an efficient interpretation
in the $\MSO$ theory of all simple $\{1,3\}$-regular graphs.
Furthermore, this efficient interpretation $I$ can be chosen such that,
for every $\MSO$ formula $\psi$, the resulting property $\psi^I$ is
invariant under subdivisions of edges; i.e.\ for every $\{1,3\}$-regular
graph $G$ and any subdivision $G_1$ of $G$ it holds
$G\models\psi^I$ iff $G_1\models\psi^I$.
\par\vskip-1ex\qed
\end{lemma}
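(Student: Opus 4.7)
The plan is to reduce from an arbitrary simple graph $H$ to a $\{1,3\}$-regular graph $G$ by a local gadget construction, and then write the interpretation using only connectivity (``topological'') primitives so that edge subdivisions are automatically invisible to it. Each vertex of $H$ is encoded by a small cycle gadget carrying a unique pendant as its marker, and each edge of $H$ is encoded by one new edge between two such gadgets. The interpretation will then read the vertex set of $H$ off as the degree-$1$ vertices of $G$ and read adjacency off from the $2$-edge-connected block structure of $G$ after contracting any intermediate chains of degree-$2$ vertices (which is what subdivisions produce).

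For the construction, replace each $v\in V(H)$ by a \emph{bracelet} $B_v$: a cycle on $d_H(v)+1$ vertices (with a small ad hoc pendant-gadget for $d_H(v)\le 1$) together with one pendant $p_v$ attached to a distinguished ``marker'' cycle vertex. The remaining $d_H(v)$ cycle vertices are the \emph{ports}. For every edge $uv\in E(H)$, pick one unused port in $B_u$ and one in $B_v$ and join them by a new ``port edge.'' A direct degree count shows that every non-pendant vertex has exactly two cycle edges plus one pendant-or-port edge, hence degree $3$, while every $p_v$ has degree $1$. So $G$ is $\{1,3\}$-regular, and the whole construction is polynomial in $|V(H)|+|E(H)|$.

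For the interpretation $I$, the domain formula $\alpha(x)$ simply asserts ``$x$ has at most one neighbor,'' picking out the pendants $p_v$ in $G$ and equally in any subdivision $G_1$ (since subdividing edges produces only degree-$2$ vertices). Using $\MSO$ set quantification for reachability one can then define the topological-neighborhood predicate ``$y$ is the unique vertex of degree different from $2$ reachable from $x$ via a path whose internal vertices all have degree $2$,'' and the topological block predicate ``$y$ and $z$ become $2$-edge-connected after contracting every maximal chain of degree-$2$ vertices.'' The adjacency formula $\beta^{\adj}(x,y)$ then states that $x,y$ are distinct pendants whose topological neighbors lie in two different topological blocks joined by a single bridge-path. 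In $G$ this is exactly the condition that the bracelets of $x$ and $y$ share a port edge, i.e., that $x,y$ correspond to adjacent vertices of $H$. The inductive translation $\psi\mapsto\psi^I$ from Definition~\ref{df:interpret} is then standard and polynomial.

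The main obstacle is the subdivision-invariance clause, which forces the interpretation to avoid any primitive that mentions an edge directly as opposed to ``up to subdivision''; in particular the bare predicate $\adj(x,y)$ of $G$ must not appear in $\psi^I$ except inside the topological wrappers above. Each of the primitives actually used --- having degree different from $2$, being joined by a path of degree-$2$ vertices, and lying in a common topological block --- is manifestly preserved when any edge is replaced by a longer path of fresh degree-$2$ vertices. By induction on the structure of $\psi$ via the standard translation, it follows that $G\models\psi^I$ iff $G_1\models\psi^I$ for every subdivision $G_1$ of $G$, yielding the invariance property stated in the lemma.
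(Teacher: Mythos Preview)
The paper does not supply its own proof of this lemma; it is quoted from~\cite{Ganianetal10} and immediately closed with a QED symbol. So there is nothing in the paper to compare against, and your attempt has to stand on its own.

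Your gadget construction is reasonable, but the interpretation breaks at the adjacency step. You declare two pendants $p_u,p_v$ adjacent when their topological neighbours lie in two distinct $2$-edge-connected blocks joined by a bridge-path. Port edges, however, are in general \emph{not} bridges: whenever $H$ contains a cycle $v_1v_2\cdots v_kv_1$, the corresponding port edges together with the bracelet cycles form one $2$-edge-connected subgraph of $G$, so all of $B_{v_1},\dots,B_{v_k}$ collapse into a single block. Concretely, take $H=K_3$. Each vertex has degree~$2$, so each bracelet is a triangle with one pendant; the three port edges weld the three triangles into a single $2$-edge-connected piece, and your formula $\beta^{\adj}$ returns the edgeless graph on three vertices instead of $K_3$. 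The same failure occurs for every $H$ that is not a forest.

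The underlying idea---mark vertices by pendants, force everything else to degree~$3$, and phrase all predicates through the degree-$2$-suppressed graph so that subdivisions are invisible---is sound and does deliver the invariance clause for free. What is missing is an \MSO-detectable boundary between vertex gadgets and edge connections that does not rely on port edges being bridges. A standard repair is to plant a distinguishable local pattern on every edge connection as well (for instance, two pendants at topological distance~$2$ on each edge path, versus a single isolated pendant on each bracelet), so that adjacency becomes ``the two vertex-pendants are joined by a path that passes through exactly one edge-marker and no other vertex-marker''. Without some device of this kind, $E(H)$ cannot be recovered from~$G$.
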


\section{The Main Theorem}
\label{sec:tmt}

\begin{theorem}[cf.\ Theorem~\ref{thm:mainI}]
\label{thm:main}
Let $L$ be a finite set of labels, $|L|\geq47$.
Unless the nonuniform Ex\-po\-nen\-tial-Time Hypothesis fails, 
there exists no graph class $\cG$ satisfying all the three properties
\begin{enumerate}[a)]
\item\label{it:clsubgraph}
$\cG$ is closed under taking subgraphs,
\item\label{it:densunb}
the tree-width of $\cG$ is densely unbounded poly-logarithmically,
\item\label{it:MSOinXP}
the \MCa{\MSOL}{\cG^L} model-checking problem is in \XP,
i.e., one can test whether $G\models\varphi$ in time 
$\cO\big(|V(G)|^{f(|\varphi|)}\big)$ for some computable function~$f$.
\end{enumerate}
\end{theorem}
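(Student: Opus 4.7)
The plan is to argue by contradiction: assume a class $\cG$ satisfies (\ref{it:clsubgraph})--(\ref{it:MSOinXP}), and derive a non-uniform sub-exponential algorithm for $n$-vertex \threecolor, which would refute the non-uniform ETH. Concretely, I would chain the two interpretations already in hand. Starting from an $\MSO$ formula $\varphi_{3\text{col}}$ of constant size (Example~\ref{ex:3colorMso1}) over an arbitrary simple graph $H$, I first apply Lemma~\ref{lem:interpretin13regular} to obtain a $\{1,3\}$-regular ``blow-up'' $H'$ with $|V(H')| \le n^{c_0}$ for some fixed~$c_0$, together with a translated formula $\varphi_{3\text{col}}^{I_1}$ whose truth on $H'$ is equivalent to $H\models \varphi_{3\text{col}}$ and, crucially, is invariant under subdivisions of edges of $H'$. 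Then I would apply Lemma~\ref{lem:keyinterpret} to realize the image of $I_1$, or rather a subdivision of it, as a subgraph of $I(\cP)$ of some grid-like $(G,\cP)\in\cG$.

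To choose the host grid-like graph, I would set $\ell := |V(H')| \le n^{c_0}$ and pick $m := c\,\ell^{4}\sqrt{\log\ell} = \Theta(n^{4 c_0}\sqrt{\log n})$ so that Theorem~\ref{thm:ReedWood} guarantees a grid-like subgraph $(G_0,\cP)$ of order at least $\ell$ inside any graph of tree-width $\ge m$. By Definition~\ref{def:denselyunboundedII} applied with any $c' > 4 c_0$, there is a witness $G\in\cG$ with $\tw(G)\ge m$ and $|V(G)| < \cO\bigl(2^{m^{1/c'}}\bigr) = 2^{o(n)}$, and subgraph closure in~(\ref{it:clsubgraph}) puts $(G_0,\cP)$ itself in $\cG$. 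All of the length-dependent combinatorial data --- the graphs $G, G_0$, the path family $\cP$, a strong edge-colouring $\gamma$ of $G_0$ furnished by Theorem~\ref{thm:strongedgecolouring}, and a fixed topological-minor embedding of $K_\ell$ into $I(\cP)$ (which exists because $K_\ell$ is a minor and each $K_\ell$-vertex has degree $\ell-1$, but we only need to route the low-degree $H'$ through it) --- is supplied as the non-uniform advice for input length~$n$. This is precisely the step where our use of \emph{non-uniform} ETH replaces the effectiveness assumption and the grid-like-minor machinery of~\cite{KT10a,KT10b}.

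Given an actual input $H$, I would compute $H'$ in polynomial time via Lemma~\ref{lem:interpretin13regular}, route $H'$ through the pre-computed $K_\ell$-embedding to obtain a subdivision $H''$ of $H'$ sitting as a subgraph of $I(\cP)$, and build the label function $\lambda_{H''}\colon V(G_0)\to 2^L$ exactly as prescribed in the proof of Lemma~\ref{lem:keyinterpret}. Subdivision-invariance of $\varphi_{3\text{col}}^{I_1}$ ensures that $H'\models \varphi_{3\text{col}}^{I_1}$ iff $H''\models \varphi_{3\text{col}}^{I_1}$, so composing the two interpretations yields a single $\MSO_1\text{-}L$ formula $\psi := \bigl(\varphi_{3\text{col}}^{I_1}\bigr)^{I_2}$ of constant size with the property $H\models \varphi_{3\text{col}}$ iff $(G_0,\lambda_{H''})\models \psi$. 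Invoking the XP model-checker from~(\ref{it:MSOinXP}) on $(G_0,\lambda_{H''})\in\cG^L$ takes time $|V(G_0)|^{f(|\psi|)} \le |V(G)|^{\cO(1)} = 2^{o(n)}$; together with polynomial preprocessing this gives a non-uniform $2^{o(n)}$-time algorithm for \threecolor, contradicting the non-uniform ETH.

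The main technical obstacle I anticipate is the careful balancing of Paragraph~2: one must verify that the exponent blow-up introduced by Lemma~\ref{lem:interpretin13regular} (factor $n^{c_0}$), by the Reed--Wood bound (raising to $\ell^{4}\sqrt{\log\ell}$), and by the poly-log density of~$\cG$ (with a freely chosen parameter $c'$) still composes to $2^{o(n)}$, and that the label $\lambda_{H''}$ together with the pre-computed advice indeed makes $I_2$ produce exactly $H''$ rather than some larger subgraph of $I(\cP)$. The subdivision-invariance clause in Lemma~\ref{lem:interpretin13regular} is what makes the routing step benign, since we cannot in general realize $H'$ itself as a subgraph of $I(\cP)$ but only topologically.
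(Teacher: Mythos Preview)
Your proposal is correct and follows essentially the same route as the paper: chain Lemma~\ref{lem:interpretin13regular} and Lemma~\ref{lem:keyinterpret} through a grid-like subgraph of a high-tree-width witness in~$\cG$, supply all the combinatorial data (the witness, the path system, the $K_\ell$-minor model, the strong edge colouring) as non-uniform advice indexed by the input length, and run the XP model-checker of~(\ref{it:MSOinXP}) on the resulting labeled host graph. One small fix: the advice must depend only on~$n$, so you should take $\ell := n^{c_0}$ (the upper bound) rather than $\ell := |V(H')|$, which varies with the particular input~$H$; the paper does this by fixing the grid-like order to $m^b$ where $m=|V(F)|$, and your parenthetical already recognises that what the advice really carries is the \emph{minor model} $\cV$ of $K_\ell$ in $I(\cP)$ (not a topological embedding of $K_\ell$), through which the low-degree $H'$ is then routed.
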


\begin{figure}
\begin{align*}
\boxed{{\mbox{input $F,\varphi$}\atop F\models\varphi ~?}}\quad&
	\substack{{\rm Lemma~\ref{lem:interpretin13regular}}\\[1ex]\mbox{\huge$\leadsto$}}
\quad\boxed{{\mbox{$\{1,3\}$-regular $H$}\atop I_1:~ H\models\varphi^{I_1} ~?}}\quad
	\substack{{A(m)}\\[1ex]\mbox{\huge$\leadsto$}}
\quad\boxed{{\mbox{$H_1$ subdivision of $H$}\atop A(m):~ H_1\subseteq I(\cP_m)}}\quad
\\[1ex]
	&\substack{{\rm Lemma~\ref{lem:keyinterpret}}\\[1pt]{{\rm and}~A(m)}
			\\[1ex]\mbox{\huge$\leadsto$}}
\quad\boxed{{ I_2:~ \psi:=(\varphi^{I_1})^{I_2}, \atop
		{ \mbox{and labeling $\lambda_1$, s.t.} \atop 
			\mbox{$(G_m,\lambda_1)^{I_2}\simeq H_1$} }}}\quad
	\substack{{}\\\mbox{\huge$\leadsto$}}
\quad\boxed{\mbox{solving}\atop \mbox{$(G_m,\lambda_1)\models\psi$}}
\end{align*}
\caption{An informal scheme of the reductions and interpretations
	used in the proof of Theorem~\ref{thm:main}.}
\label{fig:scheme_main}
\end{figure}

\begin{proof}
We will show that if there exists a graph class $\cG$
satisfying all three properties stated above,
then we contradict the non-uniform ETH.
Fix $b \in \N$ (to be determined later from Lemma~\ref{lem:interpretin13regular}) 
and any sufficiently large $c \in \N$
such that~$c > 5b$. By (\ref{it:densunb}) and Definition~\ref{def:denselyunboundedII},
we have that for all $m \in \N$ there is $G'_m \in \cG$ such that $\tw(G'_m)\geq m^{5b}$
and $|V(G'_m)| < \cO \big(2^{m^{5b/c}} \big)$. 

By Proposition~\ref{thm:ReedWood}, 
the graph~$G'_m$ contains a subgraph $G_m \subseteq G'_m$ 
which is grid-like as $(G_m,\cP_m)$ of order $m^b$, for all sufficiently large~$m$.
Also $G_m\in\cG$ by (\ref{it:clsubgraph}).
We fix (one of) the $K_{m^b}$\,-minor in $I(\cP_m)$,
and denote by $\cV_m$ the partition of the vertex set of $I(\cP_m)$ 
into connected subgraphs that define this minor.
Furthermore, by Theorem~\ref{thm:strongedgecolouring},
there exists a strong edge colouring $\gamma_m\colon E(G_m) \to \{1, \ldots,22\}$ of~$G_m$.
Define an advice function $A$ that acquires the values
$A(m):=\langle G_m,\cP_m,\cV_m,\gamma_m\rangle$ 
(whenever $m$ is large enough for $G_m$ to be defined as above).
Since $c>5b$ and $|V(G_m)|<\cO\big(2^{m^{5b/c}}\big)$,
our advice function $A$ is sub-exponentially bounded;
$|A(m)|=\cO\big(|V(G)|^2\big)< \cO\big(2^{2m^{5b/c}}\big)$.

\smallskip
Now we get to the core of the proof (cf.~Figure~\ref{fig:scheme_main}).
Assume that we get an arbitrary graph $F$ and any $\MSO$ formula $\varphi$ as input.
We will show that the model-checking instance $F\models\varphi$ can be
solved in sub-exponential time wrt.\ $m=|V(F)|$
with help of our advice function $A$.
By Lemma~\ref{lem:interpretin13regular}, there is an interpretation $I_1$
such that there exists a $\{1,3\}$-regular graph $H$ and $H^{I_1}\simeq F$.
Moreover, since $I_1$ is efficient, we can compute $H$ efficiently
and $|V(H)|\leq m^b$ for a suitable fixed~$b$ and sufficiently large~$m$.
Then, we query the oracle advice value 
$A(m)=\langle G_m,\cP_m,\cV_m,\gamma_m\rangle$.
Since our advice $(G_m,\cP_m)$ is a grid-like graph of order $m^b$---%
i.e., its intersection graph $I(\cP_m)$ has a $K_{m^b}$\,-minor---%
\,$I(\cP_m)$ has a minor isomorphic to $H$, too.
But $H$ is $\{1,3\}$-regular and, in particular, has maximum degree three.
Hence there exists a subgraph $H_1\subseteq I(\cP_m)$ that is 
isomorphic to a subdivision of~$H$ 
(in other words, $H$ is a topological minor of~$I(\cP_m)$).
This subgraph $H_1$ can be straightforwardly computed from the advice $\cV_m$ over 
$(G_m,\cP_m)$ in polynomial time.

By Lemma~\ref{lem:keyinterpret} there is another efficient interpretation $I_2$
assigning to $H_1$ a labeling $\lambda_1$ such that
$(G_m,\lambda_1)^{I_2}\simeq H_1$.
This $\lambda_1$ can actually be computed very easily with help of the
advice $\gamma_m$ from $A(m)$ along the lines of the proof of
Lemma~\ref{lem:keyinterpret}, not even using the algorithmic
part of Theorem~\ref{thm:strongedgecolouring}.
Finally, we compute in polynomial time the formula
$\psi\equiv(\varphi^{I_1})^{I_2}$.
According to Lemma~\ref{lem:interpretin13regular}, 
$\psi$ is invariant under subdivisions of edges, and so 
$H \models \varphi^{I_1}$~$\iff$~$H_1 \models \varphi^{I_1}$.
Then, by the interpretation principle,
$F \models \varphi$~$\iff$~$H \models \varphi^{I_1}$~$\iff$~$H_1 \models
\varphi^{I_1}$~$\iff$~$(G_m,\lambda_1) \models \psi$.
The final task is to run the algorithm of (\ref{it:MSOinXP})
on the instance $(G_m,\lambda_1)\models\psi$.
The run-time is $|V(G_m)|^p$ for some $p$ depending only on~$\psi$,
i.e.\ only on $\varphi$.
Recall that $m=|V(F)|$ and $|V(G_m)|<\cO\big(2^{m^{5b/c}}\big)$.
Hence we get a solution to the model-checking instance $F\models\varphi$
in time $\cO\big(|V(G_m)|^{f(|\varphi|)}\big) <\cO\big(2^{f(|\varphi|)\cdot
	m^{5b/c}}\big)\in2^{\cO(m^{1-\varepsilon})}$
for any fixed $\varphi$, with a sub-exponentially bounded oracle advice
function $A$. 

In particular, if~$\varphi$ expresses the fact that a graph is 3-colourable
(Example~\ref{ex:3colorMso1}),
then this shows that $\text{\sc 3-Colourability} \in \subexpsubexp{m}$, 
contradicting non-uniform ETH.
\end{proof}

\begin{proposition}
Theorem~\ref{thm:main} remains valid even if (\ref{it:densunb}) is replaced
with ``the tree-width of $\cG$ is densely unbounded by 
$\log^{q\cdot\gamma}$ with gap degree $\gamma$''
for any $q>8$.
\end{proposition}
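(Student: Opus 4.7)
The plan is to rerun the proof of Theorem~\ref{thm:main} essentially verbatim, since the only place where the poly-logarithmic densely unbounded hypothesis is used is to produce the witness graphs $G'_m$; it then suffices to check that the weaker hypothesis still yields witnesses of the sub-exponential size required to make the rest of the argument go through.

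First, I would unpack what the replaced condition gives. ``The tree-width of $\cG$ is densely unbounded by $\log^{q\gamma}$ with gap-degree~$\gamma$'' means, by Definition~\ref{def:denselyunbounded} with $g(n) = \log^{q\gamma} n$, that for every $m\in\N$ there exists $G\in\cG$ with $\tw(G)\geq m$ and $|V(G)| < \cO\bigl(g^{-1}(m^\gamma)\bigr)$. Computing $g^{-1}(t) = 2^{t^{1/(q\gamma)}}$, this becomes $|V(G)| < \cO\bigl(2^{m^{1/q}}\bigr)$. Crucially, the gap-degree $\gamma$ cancels out, so the witness size depends only on $q$.

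Next, I would plug this into the argument of Theorem~\ref{thm:main}, letting $b$ be the constant from Lemma~\ref{lem:interpretin13regular} such that the $\{1,3\}$-regular graph $H$ produced from an input $F$ satisfies $|V(H)| \leq m_F^b$, where $m_F = |V(F)|$. By Theorem~\ref{thm:ReedWood}, obtaining a grid-like subgraph of order $m_F^b$ requires tree-width at least $c\cdot(m_F^b)^4\sqrt{\log m_F^b}$, which is bounded above by $m_F^{5b}$ for $b\geq1$ and sufficiently large $m_F$. Applying the unpacked hypothesis with parameter $m_F^{5b}$ in place of $m$, I obtain $G'_{m_F}\in\cG$ with $\tw(G'_{m_F}) \geq m_F^{5b}$ and $|V(G'_{m_F})| < \cO\bigl(2^{m_F^{5b/q}}\bigr)$. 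The advice function $A$ and all the reductions ($I_1$ via Lemma~\ref{lem:interpretin13regular}, $I_2$ via Lemma~\ref{lem:keyinterpret}) are then defined exactly as in the original proof; the overall running time of the final model-checking call becomes $\cO\bigl(|V(G_{m_F})|^{f(|\varphi|)}\bigr) < \cO\bigl(2^{f(|\varphi|)\cdot m_F^{5b/q}}\bigr)$, and the advice length is $\cO\bigl(2^{2 m_F^{5b/q}}\bigr)$.

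The only step requiring attention is the arithmetic: for the above to remain sub-exponential in $m_F$, I need $5b/q < 1$, i.e.\ $q > 5b$. Since the constant $b$ from Lemma~\ref{lem:interpretin13regular} can be taken with $5b \leq 8$ (and more tightly, $4b+1<8$ suffices after absorbing the $\sqrt{\log}$ factor into $m_F^{o(1)}$), the assumption $q > 8$ already gives $5b/q < 1$, hence a sub-exponentially bounded advice function and a sub-exponential model-checking algorithm. Specialising to $\varphi$ encoding \threecolor\ (Example~\ref{ex:3colorMso1}) then yields $\text{\sc 3-Colourability} \in \subexpsubexp{m}$, contradicting the non-uniform ETH exactly as before. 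The main — indeed, only — obstacle is verifying the numeric inequality $q > 5b$ under the hypothesis $q>8$; no new conceptual input is required.
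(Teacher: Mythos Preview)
Your approach is exactly the paper's: unwind Definition~\ref{def:denselyunbounded} (noting the gap-degree cancels), then rerun the proof of Theorem~\ref{thm:main} and check the exponent. The paper's proof sketch does precisely this.

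The only weak point is your final numeric verification. You derive the condition $q>5b$ (or your ``tighter'' $4b+1<8$), but you never pin down what $b$ is; you simply assert that $b$ can be taken small enough. The paper does pin it down: Lemma~\ref{lem:interpretin13regular} comes from the explicit construction in~\cite{Ganianetal10}, which gives $b=2$. With $b=2$ your stated inequality $5b\le 8$ fails ($10\not\le 8$), and so does $4b+1<8$. What actually works is the sharper Reed--Wood exponent: tree-width $c\,\ell^{4}\sqrt{\log\ell}$ with $\ell=m_F^{b}$ needs only tree-width $m_F^{4b+o(1)}$, so the required inequality is $q>4b$. With $b=2$ this is exactly $q>8$, matching the hypothesis. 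So replace your ``$5b\le 8$'' claim by the observation that the crude exponent $5b$ in the proof of Theorem~\ref{thm:main} can be tightened to $4b+o(1)$, and then invoke $b=2$ explicitly.
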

\begin{proof}[Proof sketch]
This follows from Definition~\ref{def:denselyunbounded}
and since Lemma~\ref{lem:interpretin13regular} works letting $b=2$
(cf.,~\cite{Ganianetal10}).
Combining with Proposition~\ref{thm:ReedWood}, we see that any exponent
$q>2\cdot4$ suffices for our arguments to work, modulo the gap degree.
\end{proof}

\section{Extending the Main Theorem}
\label{sec:tmt-s}

We can strengthen Theorem~\ref{thm:main} by showing that even every problem
in the {\em Polynomial-Time Hierarchy} (\PH)~\cite{Sto76} is in~$\subexpsubexp{n}$,
i.e., admits subexponential-sized circuits.
This stronger new conclusion comes at the price of a stricter assumption on the
graph class~$\cG$; we assume that the \MCa{\MSOL}{\cG^L} model-checking 
problem is in \XP\
for \emph{every} finite set of labels~$L$ such that~$|L| =\ca O(|\varphi|)$,
i.e., wrt.\ the formula size~$|\varphi|$ as a parameter
determining also the label set~$L$.
Note that in Theorem~\ref{thm:main}, $L$ was a fixed finite set of labels. 

We also study what happens if
we drop the condition that $\cG$ is \emph{densely} unbounded, and
only require that $\cG$ does not have poly-logarithmically bounded tree-width 
(i.e., there might be arbitrarily large gaps between the graphs
witnessing large tree-width in $\cG$).
Then we can show that all problems in the Polynomial-Time Hierarchy would 
admit \emph{robust simulations}~\cite{FS11} using subexponential-sized circuits.

\subsection{PH collapse result}

Our strategy to prove this result is as follows. We first define
a problem which we call $\SigmaCol k$ and show it to be complete
for~$\Sigma_k^p$, the {\em$k$-th level of \PH}. The problem $\SigmaCol k$
turns out to be expressible in \MSOL for each~$k$, though, the required 
set of labels~$L$ depends on $k$.
Now any language in \PH\ reduces to
$\SigmaCol k$ for some~$k$ and hence it is sufficient to
show that $\SigmaCol k \in \subexpsubexp{n}$ for all~$k$. We show
this by mimicking the proof of Theorem~\ref{thm:main}. We start
by defining the problem $\SigmaCol k$.
 
For a graph $G$ and a set $S \subseteq V(G)$,
a function $f\colon S \to \{1,2,3\}$ is called a {\em precolouring} 
of $G$ on $S$ iff the induced subgraph $G[S]$ is properly three-coloured.
For two precolourings $f_i\colon S_i \to \{1,2,3\}$, $i=1,2$,
with $S_1 \cap S_2 = \emptyset$, we let $f = f_1 \cup f_2$
be defined as $f\colon S_1 \cup S_2 \to \{1,2,3\}$ such that for
all $x \in S_1 \cup S_2$, $f(x) = f_i(x)$ iff $x \in S_i$.

\begin{definition}[Alternating colouring, and $\SigmaCol k$]
\label{def:alternating-colouring}
Let $G$ be a graph, $k$ an odd positive integer, 
$V_0, V_1, \ldots, V_{k} \subseteq V(G)$ be a partition of $V(G)$,
and $f_0\colon V_0 \to \{1,2,3\}$ be a precolouring of $G$ on $V_0$.
A \emph{$k$-alternating colouring for $(G, f_0, V_0, V_1, \ldots, V_{k})$}
is a function $f_1\colon V_1 \to \{1,2,3\}$ such that
\begin{enumerate}[i)]
\item $f_0 \cup f_1$ is a precolouring for $V_0 \cup V_1$; and
\item if $k > 1$, for all $f_2\colon V_2 \to \{1,2,3\}$ such that
	$f_0 \cup f_1 \cup f_2$ is a precolouring for $V_0 \cup V_1 \cup V_2$,
	there exists a $(k-2)$-alternating colouring for
	$(G, f_0', V_0', V_3, \ldots, V_{k})$,
	where $V_0' = V_0 \cup V_1 \cup V_2$ and $f_0' = f_0 \cup f_1 \cup f_2$.
\end{enumerate}

For any odd $k \in \N$, the {\em problem $\SigmaCol k$} is defined as follows:
Given a graph $G$, a partition $V_0\cup V_1\cup \ldots\cup V_{k}=V(G)$, and
a precolouring $f_0\colon V_0 \to \{1,2,3\}$, decide whether
there is a $k$-alternating colouring for $(G, f_0, V_0, V_1, \ldots, V_{k})$.
\end{definition}

Recall that a \emph{polynomial-time many-one honest reduction} from~$L_1$ to~$L_2$ 
is a polynomial-time computable function~$f \colon \mathbf{N} \rightarrow \mathbf{N}$
such that $x \in L_1$ iff $f(x) \in L_2$ and~$|x|^{1/b} \le |f(x)| \le |x|^{b}$ for some
integer~$b>0$~\cite{DF03}.

Note that for $k = 1$ and $V_0 = \emptyset$, $V_1 = V$, the problem is 
the classical $\threecolor$ problem and hence complete for $\Sigma_{1}^p = \NP$.
More generally:
\begin{theorem}
\label{thm:sigmacol-hardness}
For each odd positive integer $k$, the $\SigmaCol k$ problem
is complete for $\Sigma_k^p$
under honest polynomial-time many-one reductions.
\end{theorem}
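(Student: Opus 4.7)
The plan is to establish $\SigmaCol k \in \Sigma_k^p$ and $\Sigma_k^p$-hardness via an honest polynomial-time many-one reduction from $\SigmaSat k$, which is well-known to be $\Sigma_k^p$-complete under honest reductions (for $k=1$ this is Cook--Levin; the alternating version is standard). Membership is immediate from unfolding Definition~\ref{def:alternating-colouring}: ``there exists a $k$-alternating colouring for $(G,f_0,V_0,\ldots,V_k)$'' has the shape $\exists f_1 \forall f_2 \cdots \exists f_k$ over functions $V_i \to \{1,2,3\}$ of polynomial bit-length (with $k$ alternations starting and ending with $\exists$, since $k$ is odd), followed by the polynomial-time check that $f_0 \cup \cdots \cup f_k$ is a proper $3$-colouring of~$G$; this matches the $\Sigma_k^p$ verifier format exactly.

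For hardness, I adapt the classical Karp reduction from \threesat{} to \threecolor{} in a level-aware way. Starting from a $\SigmaSat k$ instance $\exists Y_1\,\forall Y_2\,\cdots\,\exists Y_k\,\phi$ with $\phi$ in 3-CNF, build $(G,f_0,V_0,V_1,\ldots,V_k)$ as follows. Place a precoloured ``palette'' triangle $\{T,F,B\}$ in $V_0$ and set $f_0$ so that $T,F,B$ receive three distinct colours. For each variable $y \in Y_i$, add the literal vertices $v_y,v_{\bar y}$ to $V_i$, joined by an edge and both adjacent to~$B$; any proper extension is then forced to colour $\{v_y,v_{\bar y}\}$ with the pair $\{T,F\}$, encoding a truth value for~$y$. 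For each clause $C = l_1 \lor l_2 \lor l_3$ of $\phi$, attach a standard constant-size OR-gadget, with all its internal vertices placed in $V_k$ and with the usual edges to $v_{l_1},v_{l_2},v_{l_3}$ and the palette triangle; the well-known property is that the gadget admits a proper $3$-colouring extension iff at least one $l_i$ is coloured~$T$. The construction is linear in the size of the input formula, so $|x|^{1/2} \le |f(x)| \le |x|^{2}$ comfortably, and the reduction is honest.

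Correctness proceeds by induction on $k$, pairing each alternation of the $\SigmaSat k$ game with the corresponding level of the colouring game. The one delicate point --- which I expect to be the main, though conceptually minor, obstacle --- is matching the universal quantifier in Definition~\ref{def:alternating-colouring} (which ranges only over $f_{2j}$ that \emph{extend} the current partial colouring to a precolouring) with the unrestricted universal quantifier over truth assignments of $Y_{2j}$ in the SAT game. This matches precisely because the variable gadget for each $y \in Y_{2j}$ admits exactly two proper extensions (``$y$ true'' and ``$y$ false''), and these extensions remain available regardless of prior colouring choices in $V_0 \cup \cdots \cup V_{2j-1}$, so the universal player is neither vacuously released nor restricted beyond the intended choice of a truth assignment. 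Dually, by placing all OR-gadget vertices in $V_k$, clause satisfaction is pushed to the terminal existential step: the final $\exists f_k$ extends to a proper colouring iff the truth assignment encoded by the variable vertices satisfies every clause of $\phi$, closing the induction.
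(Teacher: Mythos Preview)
Your proposal is correct and follows essentially the same approach as the paper: membership via the alternating-quantifier shape of Definition~\ref{def:alternating-colouring}, and hardness via the classical \textsc{Sat}-to-\threecolor\ reduction with the palette triangle in $V_0$, the literal pair $v_y,v_{\bar y}$ placed in $V_i$ for $y\in Y_i$, and all OR-gadget vertices pushed into the innermost existential block $V_k$. The paper works with general CNF (chaining $\ca O(m)$ OR-gadgets per clause) rather than 3-CNF, and spells out the inductive correctness and the one-to-one correspondence between precolourings of $V_{2j}$ and truth assignments to $Y_{2j}$ in more detail, but these are the same ideas you sketch.
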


\begin{proof}
Containment follows from the existence of an alternating Turing machine that guesses the colouring of vertices
in the respective sets $V_i$.
For hardness, consider the problem $\SigmaSat k$ (also known as $\rm QSAT_k$)
which is the set of true quantified
Boolean formulas with $k-1$ quantifier alternations beginning with an
$\exists$-quantifier, such that the formulas are in CNF for odd~$k$ and
in DNF for even~$k$.
By~\cite{Sto76,Wra76}, for each $k \in \N$, $\SigmaSat k$ is complete for $\Sigma_k^p$
under honest polynomial-time many-one reductions.
We give a polynomial-time many-one reduction from $\SigmaSat k$ to
$\SigmaCol k$ by extending the standard reduction from
SAT to \threecolor.  Given an input
$\exists \tilde x^1 \forall \tilde x^2 \cdots \exists \tilde x^k
\varphi(\tilde x^1, \ldots, \tilde x^k)$ to $\SigmaSat k$,
where $\varphi$ is a Boolean formula in CNF and $(\tilde x^1, \ldots, \tilde x^k)$
is a partition of the variables in $\varphi$ such that a variable
in $\tilde x^i$ is existentially quantified if
$i$ is odd and universally otherwise, we create a graph $G = (V, E)$ as follows:
\begin{figure}[tbp]
\hfill \includegraphics{fig08.4} \hfill \includegraphics{fig08.5} \hfill{}
\caption{$\SigmaCol k$ reduction; left: variables; right: or-gadget\label{fig:3colred1}}
\end{figure}
\begin{itemize}
\item First, we create a triangle with distinct vertices $\oplus$ (``true''),
	$\ominus$ (``false''), and $\otimes$ (``forbid''), and 
\item for each variable $x$, we create an edge between two 
	distinct vertices $v_{x}$ and $v_{\bar x}$, and connect
	both vertices to $\otimes$.
	The result is depicted in Figure~\ref{fig:3colred1}.
\item For each CNF clause $\{l_1, l_2, \ldots, l_m\}$, we use $\ca O(m)$
	of the OR-gadgets depicted in Figure~\ref{fig:3colred1}.
	The output vertex of each OR-gadget is connected to $\otimes$.
	The output of the final OR-gadget for each clause is additionally
	connected to $\ominus$.
\item We let $V_0 = \{\ominus, \oplus, \otimes\}$ and 
	$f_0$ be defined as $f_0(\ominus) = 1$, $f_0(\oplus) = 2$, and $f_0(\otimes) = 3$.
\item For each $1 \le i \le k$, we let
	$V_i \supseteq \{\,v_{x}, v_{\bar x} \mid x \in \tilde x^i\,\}$,
	and additionally let $V_k$ contain all OR-gadgets.
\end{itemize}
It is not hard to see that this reduction takes polynomial time.
We induct over $k - l$ and show that for every even $0 \le l \le k-1$
 the following holds:
Let $\alpha$ be an assignment to the variables in
$\tilde x := \tilde x^1 \cup \cdots \cup \tilde x^l$.
Then $\exists \tilde x^{l+1} \forall \tilde x^{l+2} \cdots \exists \tilde x^k
\varphi(\alpha(\tilde x^1), \ldots, \alpha(\tilde x^l), \tilde x^{l+1},
\ldots, \tilde x^k) = 1$,
iff there is a $(k-l)$-alternating colouring for
$(G, f_0', V_0', V_{l+1}, \ldots, V_k)$,
where $V_0' = V_0 \cup \cdots \cup V_{l}$ and $f_0'\colon V_0' \to \{1,2,3\}$ with
$f_0'(v_{x}) = 1 + \alpha(x)$ and $f_0'(v_{\bar x}) = 2 - \alpha(x)$
for all variables $x \in \tilde x$.

The base case of induction is $l = k-1$.
Suppose $\exists \tilde x^{k} \varphi(\alpha(\tilde x^1), \ldots, \alpha(\tilde x^l),
\tilde x^k) = 1$. Then there is an assignment $\alpha'$ to the variables
of $\tilde x^k$, such that $\varphi(\alpha(\tilde x^1), \ldots, \alpha(\tilde x^l),
\alpha'(\tilde x^k)) = 1$.
We need to show that there is a $1$-alternating colouring
for $(G, V_0', f_0', V_1)$, i.e., a precolouring $f_1\colon V_1 \to \{1,2,3\}$,
such that $f_0' \cup f_1$ is a proper three-colouring of the graph.
Let for each $x \in \tilde x^k$,
$f_1(v_{x}) = 1 + \alpha'(x)$ and $f_1(v_{\bar x}) = 2 - \alpha'(x)$.
Then, using the same arguments as for the standard ${\rm SAT} \le_m \threecolor$
reduction, $f_0' \cup f_1$ is a three-colouring of the graph.
For the converse direction, suppose there is a precolouring $f_1\colon V_1 \to
\{1,2,3\}$ such that $f := f_0' \cup f_1$ is a three-colouring of the graph.
For $x \in \tilde x^k$, let $\alpha'(x) := f(v_{x}) - 1$.
Since for each variable $x$, the vertices $v_{x}$ and $v_{\bar x}$
are connected to $\otimes$, we know their colours are either $\ominus, \oplus$
or $\oplus, \ominus$, i.e., $\alpha'(x) \in \{0,1\}$.
Similarly, the output vertex of every OR-gadget is coloured
either $\ominus$ or $\oplus$.
In particular, the output vertex of the final OR-gadget for a clause
$\{l_1, \ldots, l_m\}$ is connected to both, $\ominus$ and $\otimes$, which
implies that it is coloured $\oplus$. 
Using a simple case distinction, we find that the output vertex of an OR-gadget
is forced to a colour $i$ if both inputs are connected to vertices coloured~$i$.
Therefore, if the final output vertex is coloured $\oplus$, there must be
be a $1 \le j \le m$ such that $f(v_{l_j}) = f(\oplus)$.
If this literal $l_j$ is, say, positive, i.e., $l_j = x$,
then either $x \in \tilde x$ and $\alpha(x) = f_0'(v_{l_j}) - 1 = 1$,
or $x \in \tilde x^k$ and $\alpha'(x) = f_1(v_{l_j}) - 1 = 1$.
In either case, the clause is satisfied.

For the induction step, let $0 \le l \le k - 3$ be even.
Suppose
$$\exists \tilde x^{l+1} \forall \tilde x^{l+2} \cdots \exists \tilde x^k
\varphi(\alpha(\tilde x^1), \ldots, \alpha(\tilde x^l), \tilde x^{l+1},
\ldots, \tilde x^k) = 1.$$
Then there is an assignment $\alpha_1$ to the variables of $\tilde
x^{l+1}$, such that for all assignments $\alpha_2$ to the variables of $\tilde
x^{l+2}$,
$$
\exists \tilde x^{l+3} \forall \tilde x^{l+4} \cdots \exists \tilde x^k
\varphi(\alpha(\tilde x^1), \ldots, \alpha(\tilde x^l),
\alpha_1(\tilde x^{l+1}),
\alpha_2(\tilde x^{l+2}),
\tilde x^{l+3}, \ldots, \tilde x^k) = 1.
$$
Let for each $x \in \tilde x^{l+1}$,
$f_1(v_{x}) = 1 + \alpha_1(x)$ and $f_1(v_{\bar x}) = 2 - \alpha_1(x)$.
Then $f_0' \cup f_1$ is a precolouring for $V_0' \cup \cdots \cup V_{l+1}$.
Furthermore, for all assignments $\alpha_2$ to the variables of $\tilde x^{l+2}$,
$f_0' \cup f_1 \cup f_2$,
where $f_2(v_{x}) = 1 + \alpha_2(x)$ and
$f_2(v_{\bar x}) = 2 - \alpha_2(x)$,
is a precolouring of $V_0' \cup V_{l+1} \cup V_{l+2}$, and therefore,
by the induction hypothesis for $l+2$,
there is a $(k-l-2)$-alternating colouring for
$(G, f_0' \cup f_1 \cup f_2, V_0' \cup V_{l+1} \cup V_{l+2}, V_{l+3}, \ldots,
V_{k})$.
Additionally, since all vertices in $V_{l+2}$ are connected to $\otimes$,
there is a one-to-one correspondence between those $f_2\colon V_{l+2} \to
\{1,2,3\}$, where $f_0' \cup f_1 \cup f_2$ is a precolouring of
$V_0' \cup V_{l+1} \cup V_{l+2}$, and the assignments $\alpha_2$.
As of Definition~\ref{def:alternating-colouring},
$f_1$ therefore satisfies the properties of a $(k-l)$-alternating colouring
for $(G, f_0', V_0', V_{l+1}, \ldots, V_k)$.

Conversely, suppose $f_1$ is a $(k-l)$-alternating colouring
for $(G, f_0', V_0', V_{l+1},$ $ \ldots, V_k)$ and 
consider an arbitrary $f_2\colon V_{l+2} \to \{1,2,3\}$
such that $f_0' \cup f_1 \cup f_2$ is
a precolouring for $V_0' \cup V_{l+1} \cup V_{l+2}$.
Then $(f_1 \cup f_2)(v) \in \{1,2\}$ for
every $v \in V_{l+1} \cup V_{l+2}$, since all of these vertices are
connected to $\otimes$ with $f_0'(\otimes) = 3$.
For $1 \le i \le 2$ and each $x \in V_{l+i}$,
let $\alpha_i(x) = f_i(x) - 1$.
By Definition~\ref{def:alternating-colouring}, there is a $(k-l-2)$-alternating
colouring for
$(G, f_0' \cup f_1 \cup f_2, V_0' \cup V_{l+1} \cup V_{l+2}, V_{l+3}, \ldots, V_k)$,
and hence, by the induction hypothesis,
$$
\exists \tilde x^{l+3} \cdots \exists \tilde x^k
\varphi(\alpha(\tilde x^1), \ldots, \alpha(\tilde x^l),
\alpha_1(\tilde x^{l+1}),
\alpha_2(\tilde x^{l+2}),
\tilde x^{l+3}, \ldots, \tilde x^k) = 1.
$$
Again, there is a one-to-one correspondence between
assignments $\alpha_2$ to the variables in $\tilde x^{l+2}$
and functions $f_2\colon V_{l+2} \to \{1,2,3\}$ such that
$f_0' \cup f_1 \cup f_2$ is a precolouring for $V_0' \cup V_{l+1} \cup V_{l+2}$,
because all vertices in $V_{l+2}$ are connected to the vertex $\otimes$.
Therefore, since $f_2$ was arbitrary, the formula holds for all assignments
of the variables in $\tilde x^{l+2}$, which implies
$$
\exists \tilde x^{l+1} \forall \tilde x^{l+1}\cdots \exists \tilde x^k
\varphi(\alpha(\tilde x^1), \ldots, \alpha(\tilde x^l),
\tilde x^{l+1}, \ldots, \tilde x^k) = 1,
$$
which concludes the proof.
\end{proof}

\begin{lemma}
\label{lem:SigmaColMSO}
$\SigmaCol k$ can be expressed in $\rm MSO_1$-L for odd~$k$ with $|L|=k+3$.
\end{lemma}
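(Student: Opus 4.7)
I will fix the label set $L = \{R_1, R_2, R_3, L_1, \ldots, L_k\}$, which has size $k + 3$, and encode an input $(G, V_0, V_1, \ldots, V_k, f_0)$ to $\SigmaCol k$ by labeling each $v \in V_0$ with $R_{f_0(v)}$ and each $v \in V_i$ (for $i \geq 1$) with $L_i$. The three labels $R_1, R_2, R_3$ do double duty: they record the colour assigned by $f_0$ and, via $V_0(x) \equiv R_1(x) \vee R_2(x) \vee R_3(x)$, identify membership in $V_0$. This is what makes the label budget $k+3$ rather than $k+4$ tight.

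Next, I will represent each candidate colouring $f_j \colon V_j \to \{1,2,3\}$, for $1 \leq j \leq k$, by a triple of monadic set variables $X_1^j, X_2^j, X_3^j$ with the intended reading $X_c^j = f_j^{-1}(c)$. I will need two auxiliary $\MSO$ formulas. The first, $\text{Part}_j(X_1^j, X_2^j, X_3^j)$, states that $(X_1^j, X_2^j, X_3^j)$ partitions exactly $V_j$---each vertex carrying label $L_j$ lies in precisely one $X_c^j$, and no other vertex lies in any of them. The second, $\text{Proper}_i$, states that the combined colouring $f_0 \cup f_1 \cup \cdots \cup f_i$ is a proper $3$-colouring of $G[V_0 \cup \cdots \cup V_i]$: with the shorthand $\chi_c^i(v) \equiv R_c(v) \vee \bigvee_{j=1}^{i} X_c^j(v)$ meaning ``$v$ currently has colour $c$'', I set
\begin{equation*}
\text{Proper}_i \;\equiv\; \forall u\,\forall v \Bigl[\adj(u,v) \to \bigwedge_{c=1}^{3} \neg\bigl(\chi_c^i(u) \wedge \chi_c^i(v)\bigr)\Bigr].
\end{equation*}
When an edge has an endpoint outside $V_0 \cup \cdots \cup V_i$ none of the $\chi_c^i$ hold for it, so this formula indeed expresses properness on exactly the intended induced subgraph.

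The final $\MSOL$ formula $\varphi_k$ mirrors the recursive definition of a $k$-alternating colouring by nesting $k$ quantifiers over the triples $X_\star^j := (X_1^j, X_2^j, X_3^j)$. At odd levels the precolouring condition appears as a conjunct (clause (i) of Definition~\ref{def:alternating-colouring}), and at even levels it appears as the premise of an implication (realising the guarded universal in clause (ii)):
\begin{align*}
\varphi_k \equiv \exists X_\star^1 \Bigl[ \text{Part}_1 \wedge \text{Proper}_1 \wedge \forall X_\star^2 \bigl( (\text{Part}_2 \wedge \text{Proper}_2) \to {} \\
\exists X_\star^3 \bigl[ \text{Part}_3 \wedge \text{Proper}_3 \wedge \cdots \exists X_\star^k (\text{Part}_k \wedge \text{Proper}_k) \bigr] \bigr) \Bigr].
\end{align*}
Since $k$ is odd the innermost quantifier is existential, as required.

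Correctness will follow by a straightforward induction on $k$, showing that $\varphi_k$ holds on the encoded $L$-labeled graph if and only if there is a $k$-alternating colouring for $(G, f_0, V_0, \ldots, V_k)$. The one point deserving care is matching the guarded quantifier of Definition~\ref{def:alternating-colouring}---which ranges only over those $f_j$ whose union with the current colouring is already a precolouring---against the MSO implication whose premise is $\text{Part}_j \wedge \text{Proper}_j$; once that alignment is verified the remaining induction is routine bookkeeping, and no genuine conceptual obstacle arises.
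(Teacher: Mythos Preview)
Your proposal is correct and essentially identical to the paper's proof: both use three colour labels to encode $f_0$ (and thereby implicitly $V_0$) plus $k$ labels for $V_1,\ldots,V_k$, then build the sentence by alternating existential/universal quantification over triples of colour-class set variables, with a ``precolouring'' predicate (your $\text{Part}_j\wedge\text{Proper}_j$, the paper's single $\Precol_j$) serving as conjunct at odd levels and implication premise at even levels. The only differences are notational---your splitting of the precolouring test into two subformulas and your explicit mention of the inductive correctness argument, which the paper leaves implicit.
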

\begin{proof}
We assume an input $(G, f_0, V_0, V_1, \ldots, V_{k})$
is encoded as the graph $G$ together with labels $V_1,\dots,V_k$
determining the corresponding sets of the vertex partition.
Let the three colours be ``Red'', ``Green'', and ``Blue''.
Then we use three additional vertex labels $R_0, G_0, B_0$ to encode the
values of the precolouring $f_0$ on $V_0$, which is part of the input.
For each $0 \le i \le k$, we define a routine MSO-formula $\Precol_i$ that expresses:
\begin{enumerate}
\item $R_i,G_i,B_i$ is a partition of $V_i$, and
\item $(\bigcup_{0 \le j \le i} R_j, \bigcup_{0 \le j \le i} G_j, 
	\bigcup_{0 \le j \le i} B_j)$ \smallskip
	is a proper $3$-colouring of the induced subgraph
	$G[V_0 \cup \cdots \cup V_i]$.
\end{enumerate}
Here $R_j,G_j,B_j$ are implicit free variables for $0<j\leq i$.

Then the formula for $\SigmaCol k$ is constructed as follows:
\begin{multline*}
\exists R_1,G_1,B_1 \mbox{``}\subseteq V_1\mbox{''} \big[\Precol_1 \wedge
	\big(\forall R_2,G_2,B_2 \mbox{``}\subseteq V_2\mbox{''} (\Precol_2 \to
\\	(\exists R_3,G_3,B_3 \mbox{``}\subseteq V_3\mbox{''} \cdots
	(\exists R_k,G_k,B_k  \mbox{``}\subseteq V_k\mbox{''}
	 \Precol_k)\cdots))\big)\big]
\end{multline*}
Note that the formula only depends on $k$.
\end{proof}


\begin{theorem}\label{thm:PH_subexp}
Unless $\PH \subseteq
 \subexpsubexp{n}$, there exists no graph class $\cG$ satisfying all three properties
\begin{enumerate}[a)]
\item
$\cG$ is closed under taking subgraphs,
\item
the tree-width of $\cG$ is densely unbounded poly-logarithmically,
\item
the \MCa{\MSOL}{\cG^L} model-checking problem of a sentence $\varphi$ 
is in \XP for all label sets $L$ such that $|L|=\ca O(|\varphi|)$,
i.e., one can test whether $G\models\varphi$ where a graph
$G\in\cG$ is vertex-labeled with $\ca O(|\varphi|)$ labels, 
in time $\cO\big(|V(G)|^{f(|\varphi|)}\big)$ for some computable function~$f$.
\end{enumerate}
\end{theorem}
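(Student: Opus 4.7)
The plan is to follow the proof of Theorem~\ref{thm:main} almost verbatim, but to replace \threecolor{} as the starting hard problem by the $\Sigma_k^p$-complete problem $\SigmaCol k$ for every odd $k\in\N$. By Theorem~\ref{thm:sigmacol-hardness}, every language of $\PH$ reduces to some $\SigmaCol k$ via an honest polynomial-time many-one reduction, and such reductions preserve sub-exponential running times up to a fixed polynomial blow-up in the input length. Therefore, showing $\SigmaCol k\in\subexpsubexp{n}$ for every odd $k$ suffices to place all of $\PH$ inside $\subexpsubexp{n}$. By Lemma~\ref{lem:SigmaColMSO}, $\SigmaCol k$ is equivalent to $\MSO$ model-checking of a fixed sentence $\varphi_k$ of length $\cO(k)$ on graphs vertex-labeled by a set $L_k$ with $|L_k|=k+3=\cO(|\varphi_k|)$, which is exactly the non-constant-label setting that hypothesis~(c) of the theorem addresses.

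Fix odd $k$ and a $\SigmaCol k$ instance on $m$ vertices, presented as an $L_k$-vertex-labeled graph $F'$. The set-up is identical to that of Theorem~\ref{thm:main}: choose $b$ as in Lemma~\ref{lem:interpretin13regular} and sufficiently large $c>5b$, and use hypothesis~(b) of $\cG$ to obtain a grid-like subgraph $(G_m,\cP_m)\subseteq G'_m\in\cG$ of order $m^b$ with $|V(G_m)|<\cO(2^{m^{5b/c}})$. This produces the same sub-exponentially bounded advice $A(m)=\langle G_m,\cP_m,\cV_m,\gamma_m\rangle$ as before.

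The reduction now runs through the same two interpretations $I_1$ and $I_2$, with the $L_k$-labels carried along. First, Lemma~\ref{lem:interpretin13regular} routinely extends to $L_k$-labeled inputs: each label is a unary predicate and can simply be attached to the corresponding vertex of the resulting $\{1,3\}$-regular graph $H'$ of size $\leq m^b$. From the advice $A(m)$ one recovers a subgraph $H_1\subseteq I(\cP_m)$ that is a subdivision of the unlabeled $H$; the subdivision-invariance clause of Lemma~\ref{lem:interpretin13regular} lets us transport the $L_k$-labels from $H$ to $H_1$ by labeling only the vertices of $H_1$ that correspond to original vertices of $H$ (subdivision vertices remain $L_k$-unlabeled). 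Next, Lemma~\ref{lem:keyinterpret} supplies a labeling $\lambda_1$ using the fixed set $L$ of $47$ labels with $(G_m,\lambda_1)^{I_2}\simeq H_1$; we extend $\lambda_1$ by pushing the $L_k$-labels across $I_2$, namely, for each $L_k$-labeled vertex $v\in V(H_1)$, represented by a path $P_v\in\cP_m$ with chosen representative $r(P_v)\in V(G_m)$ from the proof of Lemma~\ref{lem:keyinterpret}, we assign the $L_k$-label of $v$ to $r(P_v)$.

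The final labeled graph $(G_m,\lambda)$ uses $|L\cup L_k|=50+k=\cO(|\varphi_k|)$ labels, and the combined formula $\psi\equiv(\varphi_k^{I_1})^{I_2}$, slightly enlarged to read the $L_k$-labels via the representative predicate $r(\cdot)$ of Lemma~\ref{lem:keyinterpret}, has size $\cO(|\varphi_k|)$. By hypothesis~(c), deciding $(G_m,\lambda)\models\psi$ takes time $|V(G_m)|^{f(|\psi|)}<\cO(2^{f(|\psi|)\cdot m^{5b/c}})=2^{\cO(m^{1-\varepsilon})}$ for any fixed $k$, and combined with the sub-exponentially sized advice $A$ this places $\SigmaCol k\in\subexpsubexp{m}$ for every odd $k$, whence $\PH\subseteq\subexpsubexp{n}$. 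The main obstacle I foresee is bookkeeping the $L_k$-labels cleanly through the full chain $F'\to H'\to H_1\to(G_m,\lambda)$ while keeping the total label count $\cO(|\varphi|)$ so that hypothesis~(c) actually applies; this amounts to spelling out a labeled analogue of Lemma~\ref{lem:interpretin13regular} and augmenting the interpretation of Lemma~\ref{lem:keyinterpret} so that one fresh unary predicate per $L_k$-label can be read off the representative vertex of each path in $\cP_m$, which is straightforward but notationally delicate.
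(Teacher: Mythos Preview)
Your proposal is correct and matches the paper's own proof essentially point for point: reduce an arbitrary $\Sigma_k^p$ language to $\SigmaCol k$ via Theorem~\ref{thm:sigmacol-hardness}, express $\SigmaCol k$ in $\MSOL$ with $k+3$ labels via Lemma~\ref{lem:SigmaColMSO}, then rerun the two-step interpretation of Theorem~\ref{thm:main} while carrying the $L_k$-labels through to end up with a total of $k+50$ labels on~$G_m$. The only cosmetic difference is that the paper composes the honest reduction into the argument directly (choosing $c>5bd$ for the reduction exponent~$d$) rather than first establishing $\SigmaCol k\in\subexpsubexp{n}$ and then invoking closure under honest reductions, but the content is identical.
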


\begin{proof}
Let $\ca L \in \Sigma_k^p$.  Let~$F$ be a polynomial-time many-one 
reduction from~$\ca L$ to $\SigmaCol k$ that runs in time~$\ca O(|x|^d)$
where $d\in\N$ is constant.  
On input $x$ of $\ca L$, we use~$F$ to map it to an instance~$F(x)$ of $\SigmaCol k$. 
Note that $|F(x)| \le |x|^d$. 

Now, $\SigmaCol k$ can be expressed in \MSOL with $k+3$ labels by
Lemma~\ref{lem:SigmaColMSO}, say by a sentence $\sigma_k$
over $L$-vertex-labeled graphs.
Choosing $c > 5bd$ (to compensate for the small increase in size of $x$),
we then continue exactly as in the proof of Theorem~\ref{thm:main}
with the advice~$A(m)$ where $m = |x|^{d}$:
\begin{itemize}
\item Namely,
by Lemma~\ref{lem:interpretin13regular}, we in time~$|F(x)|^b$ map~$F(x)$ into a
$\{1,3\}$-regular $L$-graph~$H$ such that~$F(x)$ has a $k$-alternating
colouring iff~$H \models \varrho$, where~$\varrho=\sigma_k^{I_1}$ 
comes from the first interpretation step, keeping its $L$-labels.
\item
In the second interpretation step, we construct (with help of $A(m)$)
a corresponding $L'$-labeling of $G_m$ where $|L'|=|L|+47=k+50$,
and the interpreted formula $\psi=\varrho^{I_2}=(\sigma_k^{I_1})^{I_2}$.
The rest follows exactly as in the former proof.
\end{itemize}
With $m = |x|^{d}$ and $c > 5bd$, the advice $A(m)$ is sub-exponentially
bounded in $|x|$, as $2^{\cO(|x|^{1-\varepsilon})}$ for every fixed $k$
and some $\varepsilon>0$ depending on~$k$ only, 
and so is the total running time.
\end{proof}

\subsection{Robustly-often Simulations}


The notion of {\em robust simulations} was introduced by 
Fortnow and Santhanam in a recent paper~\cite{FS11}.
Given a language~$L$ and a complexity class~$\cC$, it is $L \roin \cC$ 
(robustly-often in~$\cC$), if there is a language $A$ in $\cC$ such that
for every $j \in \N$ there are infinitely many $m$ such that $A$ 
and $L$ agree on all input lengths between $m$ and $m^j$. Intuitively,
if $L \roin \cC$, then there exists a $\cC$-algorithm that solves infinitely
many ``polynomially-wide patches'' of input lengths of instances of~$L$.
For a formal definition of the concept of \emph{robustly-often simulations}, 
we need some more definitions from~\cite{FS11}:

\begin{definition}\label{defn:robustset}
Let $L$ be a language and $\cC$ be a complexity class.
\begin{itemize}
\item
A set of positive integers $S$ is {\em robust} if for each positive integer $j$ there
is a positive integer $m \ge 2$ such that $\{m,m+1,\dots,m^j\} \subseteq S$.
\item
We say the language $L$ is in $\cC$ on $S\subseteq \N$, 
if there is a language $L' \in \cC$ such that $L_n = L_n'$ for all $n \in S$ 
(where $L_n$ and $L_n'$ are the sets of words of length $n$ in $L$ and $L'$, respectively).
\item
We say that $L \roin \cC$ if there is a robust set $S$ such that $L \in \cC$ on $S$. 
In such a case we say that there is a {\em robustly-often simulation} of $L$ in $\cC$.
\end{itemize}
\end{definition}

Given a function $s\colon \N \to \N$, we denote by $\SIZE(s)$ the class of Boolean functions
$f = \{f_n\}$ such that for each $n$, $f_n$ has Boolean circuits of size $\cO(s(n))$.

\begin{lemma}
\label{lem:col-np-collapse}
If $\threecolor \roin \SIZE(2^{n^{1/c}})$ for all $c > 1$, then
$\NP \rosubseteq \SIZE(2^{n^{1/d}})$ for all $d > 1$.
\end{lemma}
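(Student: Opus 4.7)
The plan is to exploit the NP-completeness of \threecolor\ and align input lengths via padding. Given $L\in\NP$ and $d>1$, I would fix a polynomial-time many-one reduction $f$ from $L$ to \threecolor\ with $|f(x)|\leq |x|^{e}$ for some constant $e = e(L)\geq 1$; note that $f$ restricted to length-$n$ inputs can be realized by a circuit of size $\poly(n)$. Then, invoke the hypothesis with some $c$ chosen large enough, specifically $c > de$, to obtain a language $A \in \SIZE(2^{n^{1/c}})$ that agrees with \threecolor\ on a robust set $S_0$.

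Next, I would define $L'$ on inputs $x$ of length $n$ as follows: compute $f(x)$ via the $\poly(n)$-size circuit, canonically pad its encoding to length exactly $n^e$ (e.g., appending a string that represents extra isolated vertices, which does not affect 3-colourability), and apply $A$'s decision on the padded string. The resulting circuit $C_n$ for $L'_n$ has size $\poly(n) + \cO(2^{(n^{e})^{1/c}}) = \cO(2^{n^{e/c}})$, and the choice $c > de$ guarantees that this is $\cO(2^{n^{1/d}})$, so $L' \in \SIZE(2^{n^{1/d}})$. Whenever $n^e \in S_0$, $A$ agrees with \threecolor\ on length~$n^e$, so $L'(x) = \threecolor(f(x)) = L(x)$. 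Hence $L'$ agrees with $L$ on every input of length $n$ in $T := \{\,n : n^e \in S_0\,\}$, and it remains to show $T$ is robust.

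The robustness of $T$ follows from that of $S_0$ by a mild calculation: given $j\geq 1$, apply robustness of $S_0$ with $j' := j+1$ to obtain $m\geq 2$ such that $\{m, m+1, \ldots, m^{j+1}\} \subseteq S_0$, and set $n := \lceil m^{1/e}\rceil$. For $m$ sufficiently large (and robustness supplies arbitrarily large such $m$), routine estimates give $n^e \geq m$ and $n^{je} \leq (m^{1/e}+1)^{je} \leq 2m^j \leq m^{j+1}$, so for every integer $k \in [n, n^j]$ one has $k^e \in [m, m^{j+1}] \subseteq S_0$ and thus $k \in T$. Consequently $T \supseteq \{n, n+1, \ldots, n^j\}$ for arbitrarily large $n$, proving $T$ is robust, and hence $L \roin \SIZE(2^{n^{1/d}})$.

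The main obstacle is bookkeeping the constants consistently: $c$ must be large enough (relative to the reduction exponent $e$ and the target exponent $d$) to absorb the polynomial length-increase under $f$, while the rounding in $n = \lceil m^{1/e}\rceil$ forces $j'$ to be slightly larger than $j$ in the robustness argument. A small but genuine subtlety is to verify that the canonical padding used on the string encoding of \threecolor\ instances truly preserves 3-colourability at the bit level, so that agreement of $A$ with \threecolor\ on length $n^e$ translates faithfully to correctness of $C_n$ on real inputs $f(x)$; once this is set up, the rest is a routine composition argument.
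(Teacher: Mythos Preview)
Your proposal is correct and follows essentially the same route as the paper: exploit NP-completeness of \threecolor, choose $c$ large relative to $d$ and the reduction blow-up, and verify that the pull-back of the robust set is again robust. The only cosmetic difference is that the paper works with an \emph{honest} reduction $|x|^{1/b}\le|f(x)|\le|x|^b$ and defines $S'=\{n:\forall x,\,|x|=n\Rightarrow|f(x)|\in S\}$, whereas you pad $f(x)$ to the fixed length $n^e$ and take $T=\{n:n^e\in S_0\}$; both devices serve the same purpose of controlling output lengths, and your constant $c>de$ matches the paper's $c=bd$. One small remark: your choice $j'=j+1$ together with the appeal to ``arbitrarily large $m$'' works but is slightly roundabout---taking $j'\ge j(e+1)$ makes the estimate $n^{je}\le(2m^{1/e})^{je}=2^{je}m^{j}\le m^{j'}$ go through for every $m\ge2$ directly, which is closer to how the paper sets $q_i=m_{bi}$.
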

\begin{proof}
We essentially follow the proof of Lemma~1 in~\cite{FS11}.
Fix $d > 1$. Let $L \in \NP$ and let, w.l.o.g., $f$ be an honest polynomial-time many-one
reduction from $L$ to $\threecolor$ such that for any word $x$ we have
that $|x|^{1/b} \le |f(x)| \le |x|^b$ (the first inequality can
always be achieved by padding).  Choose $c = bd$ and let $S$ be a
robust set such that $\threecolor \in \SIZE(2^{n^{1/c}})$ on~$S$.  
Define~$S'$ as follows: $n' \in S'$ iff 
for all words~$x$ of length~$n'$, we have~$|f(x)| \in S$.

We claim that~$S'$ is robust. The robustness of~$S$ is equivalent to
saying that for each positive integer~$j$ there exists an integer~$m_j$
such that~$n \in S$ for all~$m_j^{1/j} \leq n \leq m_j^j$. To show
that~$S'$ is robust, too, we need to exhibit for each positive~$j$ an
integer~$q_j$ such that~$p \in S'$ for all~$q_j^{1/j} \leq p \leq q_j^j$.
Fix an integer~$i$ and choose~$q_i = m_{bi}$. Now~$n \in S$ for all
$q_{i}^{1/bi} \le n \le q_{i}^{bi}$. If~$q_{i}^{1/i} \le p \le
 q_{i}^{i}$ and~$x$ is a word of length~$p$, then $q_{i}^{1/bi} \le
 |f(x)| \le q_{i}^{bi}$ and hence~$|f(x)| \in S$. By the definition
 of~$S'$, all integers between~$q_{i}^{1/i}$ and~$q_{i}^i$ are in~$S'$, proving 
that~$S'$ is robust. 

Finally, we show that $L \in \SIZE(2^{n^{1/d}})$ on $S'$.  Let $x$ be 
a word such that $|x| \in S'$.
Since $\threecolor \in \SIZE(2^{n^{1/c}})$ on~$S$, there is a Boolean circuit of size
$\cO(2^{|f(x)|^{1/c}})$ that decides membership of $f(x)$ in \threecolor.
This circuit has size $\cO(2^{|x|^{b/c}}) = \cO(2^{|x|^{1/d}})$, which
concludes the proof. 
\end{proof}

\begin{corollary} \label{cor:SigmakCol_PH}
If $\SigmaCol k \roin \SIZE(2^{n^{1/c}})$ for all odd $k \in \N$ and all
$c > 1$, then $\PH \rosubseteq \SIZE(2^{n^{1/d}})$ for all $d > 1$.
\end{corollary}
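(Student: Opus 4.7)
The plan is to upgrade the argument of Lemma~\ref{lem:col-np-collapse} from $\NP$ to each individual level $\Sigma_k^p$ of the polynomial hierarchy, and then take the union. Fix any $d>1$ and any language $L\in\PH$; then $L\in\Sigma_k^p$ for some odd $k$. By Theorem~\ref{thm:sigmacol-hardness}, there is an honest polynomial-time many-one reduction $f$ from $L$ to $\SigmaCol k$, so $|x|^{1/b}\le|f(x)|\le|x|^b$ for some integer $b>0$. Set $c:=bd$. By the assumed hypothesis applied at level $k$ with this $c$, there is a robust set $S$ such that $\SigmaCol k\in\SIZE(2^{n^{1/c}})$ on $S$.

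Following the strategy of Lemma~\ref{lem:col-np-collapse} verbatim, define $S':=\{\,n'\in\N : |f(x)|\in S \text{ for every word } x \text{ of length } n'\,\}$. The first task is to verify that $S'$ is robust: given any positive integer $j$, apply robustness of $S$ at index $bj$ to obtain $m_{bj}$ with $\{\,m_{bj}^{1/(bj)},\ldots,m_{bj}^{bj}\,\}\subseteq S$; then take $q_j:=m_{bj}$ and use the honesty bounds on $f$ to show $|f(x)|\in S$ for every $x$ with $q_j^{1/j}\le |x|\le q_j^{j}$, exactly as is done in the proof of Lemma~\ref{lem:col-np-collapse}. Hence $\{\,q_j^{1/j},\ldots,q_j^{j}\,\}\subseteq S'$, so $S'$ is robust. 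The second task is the circuit construction: for $|x|\in S'$, decide $x\in L$ by first computing $f(x)$ with a polynomial-size circuit (available because $f$ is polynomial-time) and then feeding the result to the size-$\cO(2^{|f(x)|^{1/c}})$ circuit for $\SigmaCol k$. The combined circuit has size $\cO(2^{|x|^{b/c}})=\cO(2^{|x|^{1/d}})$, which witnesses $L\in\SIZE(2^{n^{1/d}})$ on $S'$, i.e.\ $L\rosubseteq\SIZE(2^{n^{1/d}})$.

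Since $L\in\PH$ and $d>1$ were arbitrary, this yields $\PH\rosubseteq\SIZE(2^{n^{1/d}})$ for every $d>1$, as claimed. The only nontrivial points are the robustness transfer from $S$ to $S'$ and the absorption of the polynomial-size circuit for $f$ into the dominant $2^{|x|^{1/d}}$ bound; both are handled by the same honesty calculation that appears in Lemma~\ref{lem:col-np-collapse}, so no genuine new obstacle arises, and the corollary follows by iterating that lemma across the levels of~\PH.
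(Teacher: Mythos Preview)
Your proof is correct and follows essentially the same approach as the paper: fix $d>1$ and $L\in\Sigma_k^p$ for odd~$k$, invoke Theorem~\ref{thm:sigmacol-hardness} to get an honest reduction to $\SigmaCol k$, and then replay the argument of Lemma~\ref{lem:col-np-collapse} verbatim with $\SigmaCol k$ in place of \threecolor. The paper's own proof is just a two-line pointer to Lemma~\ref{lem:col-np-collapse} (noting only that honesty can be arranged by padding in~$V_0$), so your write-up simply spells out the same steps in more detail.
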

\begin{proof}
The proof is almost identical to the proof of Lemma~\ref{lem:col-np-collapse}.
Fix $d > 1$.
Let $L \in \PH$.  Then $L \in \Sigma_k^p$ for some $k \in \N$.
We can w.l.o.g.\ assume $k$ is odd.  By
Theorem~\ref{thm:sigmacol-hardness}, $\SigmaCol k$ is hard for $\Sigma_k^p$,
i.e., there is an polynomial-time many-one
reduction from $L$ to $\SigmaCol k$, which can be made honest by padding extra vertices in $V_0$.
We can now continue exactly as in the proof
of Lemma~\ref{lem:col-np-collapse}.
\end{proof}

In accordance with Definition~\ref{def:denselyunboundedII},
we say that the tree-width of $\cG$ is
{\em unbounded poly-logarithmically} (i.e., dropping the
``dense'' property and allowing arbitrarily large gaps between witnesses);
if, for all $c > 1$, there are infinitely many $m \in \N$
such that there exists $G\in\cG$ whose tree-width is $\tw(G)\geq m$ and size
$|V(G)| < \cO\big(2^{m^{1/c}}\big)$.
We then have the following collapse result under robustly-often simulations.

\begin{theorem}
Unless $\PH \rosubseteq \SIZE(2^{n^{1/d}})$ for any $d > 1$,
there exists no graph class $\cG$ satisfying all the three properties
\begin{enumerate}[a)]
\item
$\cG$ is closed under taking subgraphs,
\item
the tree-width of $\cG$ is unbounded poly-logarithmically,
\item
the \MCa{\MSOL}{\cG^L} model-checking problem of a sentence $\varphi$ 
is in \XP for all label sets $L$ such that $|L|=\ca O(|\varphi|)$,
i.e., one can test whether $G\models\varphi$ where a graph
$G\in\cG$ is vertex-labeled with $\ca O(|\varphi|)$ labels, 
in time $\cO\big(|V(G)|^{f(|\varphi|)}\big)$ for some computable function~$f$.

\end{enumerate}
\end{theorem}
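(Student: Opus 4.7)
The plan is to mimic the reductions from Theorems~\ref{thm:main} and~\ref{thm:PH_subexp}, now exploiting only the weaker ``unbounded poly-logarithmically'' assumption and packaging the resulting non-uniform simulations into a robustly-often one. By Corollary~\ref{cor:SigmakCol_PH}, it suffices to show that $\SigmaCol k \roin \SIZE(2^{n^{1/c}})$ holds for every odd $k$ and every $c>1$. I would therefore fix such $k$ and $c$; let $b$ denote the constant from Lemma~\ref{lem:interpretin13regular}; and let $e$ denote the XP-exponent $f(|\psi|)$ supplied by assumption (c) for the fixed formula $\psi=(\sigma_k^{I_1})^{I_2}$ used in the proof of Theorem~\ref{thm:PH_subexp} (note $|\psi|$, and hence $e$, depends only on $k$). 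The task then reduces to exhibiting a single robust set $S\subseteq\N$ together with a non-uniform circuit family of size $\cO(2^{n^{1/c}})$ that correctly decides $\SigmaCol k$ on every input of length $n\in S$.

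For each $j\in\N$ I would carve out a witness interval $[n_j,n_j^j]\subseteq S$ from one advice graph. Set $c_j^*:=6bcj$. By assumption (b), the set of thresholds $q$ admitting some $G\in\cG$ with $\tw(G)\geq q$ and $|V(G)|<2^{q^{1/c_j^*}}$ is infinite, so I would pick such a $q_j$ large enough that $n_j:=\lceil(e^{c_j^*}q_j)^{c/c_j^*}\rceil$ both exceeds $n_{j-1}^{j-1}$ (making the $n_j$ strictly increasing) and satisfies $n_j^j\leq q_j^{1/(5b)}$. The latter inequality is feasible because the exponent gap $1/(5b)-jc/c_j^*=1/(30b)$ is strictly positive, so $q_j^{1/(30b)}$ eventually dominates the fixed factor $2^j e^{jc}$ that arises from unfolding $n_j^j$. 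Let $G_{q_j}\in\cG$ be the corresponding witness graph and define $S:=\bigcup_{j\geq 1}[n_j,n_j^j]$, which is robust by construction.

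For each input $x$ to $\SigmaCol k$ of length $p\in[n_j,n_j^j]$, I would run the full reduction chain from Theorems~\ref{thm:main} and~\ref{thm:PH_subexp}: encode $x$ as an $L$-vertex-labeled graph to be model-checked against $\sigma_k$ via Lemma~\ref{lem:SigmaColMSO}; apply Lemma~\ref{lem:interpretin13regular} to pass to a $\{1,3\}$-regular graph $H$ of size $\leq p^b$; invoke $\tw(G_{q_j})\geq q_j\geq p^{5b}$ together with Proposition~\ref{thm:ReedWood} to extract a grid-like subgraph of $G_{q_j}$ of order $\geq p^b$ and embed $H$ into it as a topological minor; label $G_{q_j}$ as prescribed by Lemma~\ref{lem:keyinterpret}; and invoke the XP model-checker of (c) on the labeled instance against $\psi$. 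All combinatorial data ($G_{q_j}$ itself, its path collection $\cP_{q_j}$, the needed $K$-minor, and its strong edge-colouring) is hard-wired into the non-uniform circuit for length $p$. The resulting circuit has size at most $|V(G_{q_j})|^{e}\leq 2^{e\cdot q_j^{1/c_j^*}}\leq 2^{p^{1/c}}$ by the choice of $n_j$, so assembling these circuits across $p\in S$ (breaking ties by the smallest applicable $j$) yields a language $A\in\SIZE(2^{n^{1/c}})$ agreeing with $\SigmaCol k$ on $S$.

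The main obstacle is not any individual step but the parameter bookkeeping: $c_j^*$ must grow at least linearly in $j$ so that the polynomial-width interval $[n_j,n_j^j]$ fits inside the usable window $[(e^{c_j^*}q_j)^{c/c_j^*},q_j^{1/(5b)}]$, yet it may not be so large as to push the circuit-size bound $2^{e\cdot q_j^{1/c_j^*}}$ beyond the target $2^{p^{1/c}}$. The gap between the required $5bcj$ and the chosen $6bcj$ is exactly the slack needed to absorb the finite factor $2^j e^{jc}$ by taking $q_j$ sufficiently large in the infinite set guaranteed by (b). Conceptually, each advice graph $G_{q_j}$ plays the role of a universal witness for a polynomial range of input lengths, and a diagonal union over $j$ assembles these ranges into the required robust cover.
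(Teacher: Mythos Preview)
Your proposal is correct and follows essentially the same approach as the paper: reduce via Corollary~\ref{cor:SigmakCol_PH} to showing $\SigmaCol k \roin \SIZE(2^{n^{1/c}})$, then for each $j$ use the poly-logarithmic unboundedness (with a $j$-dependent exponent) to obtain an advice graph whose tree-width is large enough to host all instances in a $j$-th-power-wide interval of input sizes, assemble these intervals into a robust set, and run the reduction chain of Theorems~\ref{thm:main} and~\ref{thm:PH_subexp} non-uniformly. The only difference is notational bookkeeping---the paper parameterizes by calling $m$ ``$j$-good'' when a witness of tree-width $\geq m^{5bj}$ and size $<2^{m^{1/2c}}$ exists, whereas you parameterize by the tree-width threshold $q_j$ and back out $n_j$; the resulting robust sets and circuit-size estimates coincide.
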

\begin{proof}
By Corollary~\ref{cor:SigmakCol_PH}, we only need to show that if there
exists a graph class satisfying all the three properties mentioned above,
then we have \mbox{$\SigmaCol k \roin \SIZE(2^{n^{1/c}})$} for any $c > 1$
and any odd integer~$k$. Fix~$c > 1$, and let $k$ be an odd integer.
For $j\in\N$, call $m\in\N\>$ $j$-good if there exists
$G\in\cG$ whose tree-width is $\tw(G)\geq m^{ja}$ and size
$|V(G)| < \cO\big(2^{m^{1/2c}}\big) =
 \cO\big(2^{(m^{ja})^{1/2jac}}\big)$, where $a$ is any constant determined
later.
Clearly, there are infinitely many $j$-good integers $m$ for each $j\in\N$.

We set $M_c(m,j)=\{m,m+1,\dots,m^{j}\}$, and
$S_c=\bigcup\, \{\,M_c(m,j): m$ is $j$-good$\,\}$.
Then $S_c$ is robust by the definition.
The point is that the following holds from Lemma~\ref{lem:SigmaColMSO} and
the fine details of the proof of Theorem~\ref{thm:main}
(choosing $a=5b$ there):
For any $n$-vertex instance of $\SigmaCol k$
such that $n\in M_c(m,j)$ and $m$ is $j$-good, this instance can be 
solved---using the assumed algorithm of (c)---in time
$\cO\big(2^{m^{1/2c}\cdot g(k)}\big)<
 \cO\big(2^{m^{1/c}}\big)\leq \cO\big(2^{n^{1/c}}\big)$
with an advice of size bounded by the same function.
Hence, indeed, \mbox{$\SigmaCol k \roin \SIZE(2^{n^{1/c}})$}.
\end{proof}
%
%

\section{Implications for Directed Width Measures}
\label{sec:directed}
\label{sec:idwm}

In this section, we briefly foray into the area of digraph width measures
and discuss, in particular, the implications of the results
in the previous section. 
This part follows up on~\cite{Ganianetal10}.

An important goal in the design of a ``good'' width measure is for it
to satisfy two seemingly contradictory requirements:
\begin{enumerate}[I)]
\item a large class 
of problems must be efficiently solvable on the graphs of bounded width; and
\item the class of the graphs of bounded width should have 
a nice, reasonably rich and natural structure.
\end{enumerate}
In contrast to the undirected graph case, where e.g.\ tree-width has become
a true success story, this effort has largely
failed for digraph width measures. 
A partial answer for the reasons of this failure 
was provided in~\cite{Ganianetal10} where it was shown that any digraph width 
measure that is different from the undirected tree-width and monotone
under directed topological minors is not algorithmically powerful. 
The phrase ``different from tree-width'' is defined by the property
that there exists a constant~$c\in\N$ such that the class of the underlying 
undirected graphs of digraphs of width at most~$c$ has unbounded tree-width. 
Algorithmic ``powerfulness'' has been defined as the property
of admitting \XP algorithms (wrt.\ the width as parameter) for all problems in \MSO.

We improve upon this result by showing that even 
if the digraph width measure is monotone just under subdigraphs,
and the underlying undirected graphs corresponding to digraphs 
of bounded width have poly-logarithmically unbounded tree-width, 
then the width measure is not algorithmically powerful. 
First note that we relax \emph{unbounded} tree-width by 
\emph{poly-logarithmically unbounded} tree-width.
This is a somehow stronger assumption, and the strengthening is unavoidable
due to a negative example shown in~\cite{Ganianetal10}.
 
Secondly, we require the directed width measure to be closed under 
\emph{subdigraphs} and not \emph{directed topological minors} as in~\cite{Ganianetal10}; 
which is, on the other hand, a much weaker requirement. 
Thirdly, our interpretation of algorithmic
powerfulness is, now, that all problems in $\MSOL$ can be solved
on {\em$L$-vertex-labeled graphs} in \XP-time wrt.\ the width {and} formula size as parameters. 
This again is a dilution of the notion of algorithmic power 
as defined in~\cite{Ganianetal10}, 
where only plain \MSO over unlabeled digraphs has been exploited.

We start by defining what it means for a digraph width measure 
to have poly-logarithmically unbounded tree-width. 
We shortly denote by $U(D)$ the underlying undirected graph of a
digraph~$D$.
\begin{definition}
A directed width measure~$\delta$ {\em largely surpasses tree-width}
if there exists $d \in \N$ such that the tree-width of the 
undirected graph class $\{\,U(D): \delta(D)\leq d\,\}$ is densely unbounded
poly-logarithmically.
\end{definition}

Then the main result of this section reads:
\begin{theorem}
\label{thm:maindir}
Let $L$ be a finite set of labels, $|L|\geq47$.
Unless the non-uniform Exponential-Time Hypothesis fails, there exists no
directed width measure $\delta$ satisfying all three properties:
\begin{enumerate}[a)]
\item
$\delta$ is monotone under taking subdigraphs;
\item
$\delta$ largely surpasses the tree-width of underlying undirected graphs; and
\item \label{it:333}
for all $L$-vertex-labeled digraphs~$D$ and all
sentences~$\varphi \in \MSOL$, the problem of deciding whether $D \models \varphi$
is solvable in time~$\ca O(|D|^{f(\delta(D), |\varphi|)})$ for some computable~$f$.

\end{enumerate}
\end{theorem}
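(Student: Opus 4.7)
The plan is to reduce Theorem~\ref{thm:maindir} to the proof of Theorem~\ref{thm:main} via the undirected graph class $\cG_d := \{\,U(D) : \delta(D) \le d\,\}$, where $d$ is the constant furnished by assumption~(b). First I would verify that $\cG_d$ satisfies conditions (\ref{it:clsubgraph}) and~(\ref{it:densunb}) of Theorem~\ref{thm:main}. Property~(\ref{it:densunb}) is immediate from assumption~(b). For subgraph-closure, given any $G \in \cG_d$ with witnessing digraph $D$ (so $U(D)=G$ and $\delta(D)\le d$) and any subgraph $H \subseteq G$, the subdigraph $D' \subseteq D$ obtained by keeping only the arcs that project onto edges of $H$ satisfies $U(D')=H$ and, by subdigraph-monotonicity, $\delta(D')\le d$; hence $H \in \cG_d$.

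The subtle point is that assumption~(\ref{it:333}) of Theorem~\ref{thm:maindir} provides an \XP\ algorithm for \MSOL-model-checking only on \emph{digraphs}, while~(\ref{it:MSOinXP}) of Theorem~\ref{thm:main} would demand one on undirected labeled graphs in~$\cG_d$. Rather than citing Theorem~\ref{thm:main} as a black box, my plan is to rerun its proof verbatim and patch the mismatch at the single point where it matters---the final model-checking call. Concretely, I would augment the advice function $A(m)$ by an additional entry: a digraph $D_m$ with $U(D_m)=G_m$ and $\delta(D_m)\le d$, whose existence is guaranteed by $G_m \in \cG_d$. Since $|V(D_m)|=|V(G_m)|$, the augmented advice remains sub-exponentially bounded.

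When the proof of Theorem~\ref{thm:main} reaches the final instance $(G_m,\lambda_1)\models\psi$, I would translate $\psi$ into a digraph formula $\psi'$ by replacing every atom $\adj(x,y)$ with $\arc(x,y)\vee\arc(y,x)$. Since $G_m = U(D_m)$, the equivalence $(G_m,\lambda_1) \models \psi$ iff $(D_m,\lambda_1) \models \psi'$ is immediate, and $|\psi'| = \ca O(|\psi|)$. Feeding $(D_m,\lambda_1)$ and $\psi'$ to the algorithm of~(\ref{it:333}) decides the instance in time $\ca O\big(|V(D_m)|^{f(d,|\psi'|)}\big) = \ca O\big(|V(G_m)|^{f'(|\varphi|)}\big)$, which matches the bound used in the proof of Theorem~\ref{thm:main}. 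Choosing $\varphi$ to encode \threecolor{} (cf.~Example~\ref{ex:3colorMso1}) then yields the same sub-exponential algorithm for $n$-vertex \threecolor, contradicting the non-uniform ETH.

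The main obstacle I anticipate is conceptual rather than technical: one must recognise that $D_m$ need not be effectively computable from $G_m$, because the proof of Theorem~\ref{thm:main} already exploits non-uniformity to obtain $G_m$ itself as free oracle advice. Thus the mere existence of some $D_m\in\cG_d$ with $U(D_m)=G_m$ suffices, which is precisely why monotonicity under subdigraphs---rather than under directed topological minors, as required in~\cite{Ganianetal10}---is already enough here.
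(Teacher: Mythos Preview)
Your proposal is correct and follows essentially the same route as the paper's proof: both define the undirected class $\cG_d=\{U(D):\delta(D)\le d\}$, verify subgraph-closure via subdigraph-monotonicity and densely unbounded tree-width via assumption~(b), then rerun the argument of Theorem~\ref{thm:main} with the digraph $D_m$ carried in the advice and the final formula translated by replacing $\adj(x,y)$ with $\arc(x,y)\vee\arc(y,x)$. The only cosmetic difference is that the paper packages $D_m$ in place of $G_m$ in the advice tuple rather than as an additional entry, which is equivalent since $G_m=U(D_m)$.
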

\begin{proof}
Assume that there exists a directed width measure~$\delta$ satisfying all
the three properties stated in the theorem. Since~$\delta$ largely
surpasses tree-width, there exists a constant~$d \in \N$ such
that the tree-width of the undirected graph class $\cG := \{\,U(D): \delta(D) \leq d\,\}$
is densely unbounded poly-logarithmically. Since~$\delta$ is monotone
under taking subdigraphs, the class~$\cG$ is closed on subgraphs.
Consider a formula $\varphi \in \MSOL$ on undirected $L$-vertex-labeled 
graph~$G$. If we construct a formula $\varphi'$ for $L$-vertex-labeled
digraphs by replacing every occurrence of the predicate $\adj(x,y)$ in~$\varphi$ 
with $\arc(x,y) \vee \arc(y,x)$, then $G \models \varphi$ iff for every 
orientation~$D$ of~$G$ it holds that~$D \models \varphi'$. 

To complete the proof, given any undirected graph~$F$ on $m$~vertices
and an \MSOL formula~$\varphi$, we use an advice function~$A(m) := \langle
 D_m, {\cP}_m, {\cV}_m, \gamma_m\rangle$ analogous to that used 
in Theorem~\ref{thm:main} to obtain a \emph{digraph}~$D_m$ 
such that~$\delta(D_m) \leq d$ and~$\big(U(D_m),{\cP}_m\big)$ is grid-like of order~$m$.
Note that for digraphs of constant $\delta$-width, the algorithm
guaranteed by condition~(\ref{it:333}) 
runs in \XP-time wrt.\ the size of the formula as parameter. 
We proceed as in the proof of Theorem~\ref{thm:main} to decide whether~$F \models \varphi$
in time~$2^{\cO(m^{1-\varepsilon})}$ using the sub-exponentially bounded oracle advice function~$A$. 
This again shows, in particular, that 
$\text{\sc 3-Colourability} \in \subexpsubexp{m}$, refuting non-uniform ETH.
\end{proof}

\section{Concluding Remarks}
\label{sec:concluding}

Our paper contributes to Kreutzer and Tazari's impressive results in this area.
Our proof is shorter and holds for $\MSOL$ logic instead of $\MSOii$ at
the price of a stronger assumption in computational complexity.
The expressive power of $\MSOii$ over graphs with labels from a set~$L$ and 
$\MSO$ with the same label set is huge---for instance, the latter is 
not able to express some natural graph problems like Hamiltonian cycle.
However, one cannot directly compare the expressive power of bare \MSOii
without labels and \MSOL over graphs with vertex labels from~$L$,
as there are problems which can be expressed in \MSOii but not 
in \MSOL and vice versa.
We have proved that it is not possible to efficiently process
latter \MSOL on graph classes with ``very'' 
unbounded tree-width which are subgraph-closed.

Besides the implications for digraph width measures discussed in Section \ref{sec:idwm}, 
there is also an implication for another width measure---clique-width.
Clique-width \cite{CMR00} (as well as rank-width) 
is a graph parameter which allows efficient ($\FPT$ time) 
model-checking of all $\MSOL$ formulas, however it has received 
some criticism for not having nice structural properties such as 
being monotone under taking subgraphs.
Our results indicate that it is unlikely any parameter exists with the 
desirable properties of clique-width which is monotone under taking subgraphs.

Finally, let us briefly mention the possibility of extending 
Theorem~\ref{thm:main} to unlabeled graphs, i.e., using plain \MSO
over $\cG$ in Theorem~\ref{thm:main}\,(\ref{it:MSOinXP}).
It is not known whether there exists any natural and nontrivial graph class 
where unlabeled $\MSO$ is efficiently solvable and yet $\MSOL$ 
model-checking is hard. Such a graph class would necessarily contain 
graphs of unbounded clique-width (since otherwise $\MSOL$ could be 
efficiently model-checked) and yet with sufficient structure to 
allow efficient model-checking of bare $\MSO$.
This indicates that such an extension is probably true.
For getting this ``unlabeled'' extension of Theorem~\ref{thm:main}
it would actually suffice to have an excluded grid theorem for graph tree-width
with a polynomial gap between the grid size and tree-width,
but that seems like a very difficult task at this moment.

\subsection*{Acknowledgements}

We thank Eric Allender for pointing out reference~\cite{FS11} and
Felix Reidl for useful discussions on robust simulations.

\bibliographystyle{abbrv}
\bibliography{gtbib}
\end{document}